\documentclass[11pt,letterpaper]{article}

\usepackage[margin=1in]{geometry}

\usepackage{amsfonts, amsmath, amssymb, amsthm}
\usepackage{mathrsfs} % \mathscr{P} or \mathscr{B}
\usepackage{setspace}

\usepackage{hyperref}
\hypersetup{
    colorlinks=true,
    linkcolor=teal,
    citecolor=teal,
    urlcolor=purple,
}
\usepackage{graphicx}           
\usepackage{xcolor}
\usepackage{algorithm}
\usepackage{algorithmic}
\usepackage{subfig}

\usepackage[]{natbib} % to use \citep, \citet, etc.
\usepackage{bibunits}
\defaultbibliographystyle{plainnat}
\defaultbibliography{ref}

\newtheorem{theorem}{Theorem} % \newtheorem{theorem}{Theorem}[section] for numbering by sections		
\newtheorem{lemma}{Lemma} % \newtheorem{lemma}[theorem]{Lemma} to follow Theorem numbering

\theoremstyle{definition}
\newtheorem{definition}{Definition}	
\newtheorem{remark}{Remark}

\newcommand{\cC}{\mathcal{C}}

\newcommand{\cI}{\mathcal{I}}
\newcommand{\cM}{\mathcal{M}}

\newcommand{\cS}{\mathcal{S}}

\newcommand{\cX}{\mathcal{X}}
\newcommand{\cY}{\mathcal{Y}}

\newcommand{\N}{\mathbb{N}}
\newcommand{\R}{\mathbb{R}}

\newcommand{\E}{\mathbb{E}} % no spacing
\renewcommand{\P}{\mathbb{P}}

\DeclareMathOperator*{\argmin}{arg\,min} % \, for spacing
\DeclareMathOperator*{\argmax}{arg\,max} % \, for spacing

\renewcommand{\epsilon}{\varepsilon}
\DeclareMathOperator*{\motimes}{\text{\raisebox{0.25ex}{\scalebox{0.7}{$\bigotimes$}}}}

\title{Inference for Clustering: Conformal Sets for Cluster Labels}
\author{YoonHaeng Hur$^\ast$ \\
    Department of Statistics, Columbia University\\
    and \\
    Anirban Nath$^\ast$\\
    Department of Statistics, Columbia University\\
    and \\
    Genevera Allen \\
    Department of Statistics, Columbia University}

\begin{document}

\maketitle

\begin{abstract}
    While clustering is ubiquitously used across science and industry, uncertainty in cluster assignments is rarely quantified with rigorous guarantees. We propose a novel conformal inference framework for clustering that returns confidence sets for cluster labels. The key challenge is that labels are unobserved and estimated from data, so naively using deterministic cluster labels can violate exchangeability and induce severe under-coverage. To address this, we propose split conformal clustering with stochastic labels, which samples labels from soft cluster labels, fits a soft classifier to predict these stochastic labels, and calibrates conformal scores to construct confidence sets for cluster labels at any query point. We establish a finite-sample lower bound on marginal coverage that reveals how under-coverage is controlled by two properties of the clustering algorithm: consistency of estimated soft labels and replace-one stability. Under mild conditions, we prove asymptotic coverage and verify these conditions for correctly specified parametric mixture models. Simulations for mixture models show that our method attains target coverage with informative set sizes, validating our theoretical results. Applications to clustering cell types in single-cell RNA-seq data demonstrate the practical utility and interpretability of our approach to quantifying cluster label uncertainty. 
\end{abstract}

\begin{bibunit}
\section{Introduction}
\label{sec:intro}
Clustering is a fundamental unsupervised learning task that aims to discover latent group structure in data and is widely used across scientific and industrial domains, including biomedicine \citep{lopez2018unsupervised}, community detection in social network analysis \citep{zhou2019analysis}, marketing \citep{kansal2018customer}, among many others. Over several decades, a rich literature has developed around clustering methodology, theory, and computation, resulting in a wide range of algorithms and software tools favored in practice. Despite this maturity, clustering analyses are often brittle, meaning that small perturbations of the data or changes in preprocessing can lead to irreproducible scientific findings and unreliable downstream decisions \citep{senbabaouglu2014critical, smith2025lack}, particularly in single-cell analyses where clustering is routinely used for putative cell-type discovery and is known to be sensitive to preprocessing and method choice \citep{wang2020impact, duo2020systematic}. 

A natural response to this instability is to ask for uncertainty quantification in clustering. However, what “uncertainty” means in the context of clustering is inherently ambiguous. Broadly speaking, uncertainty in clustering can be viewed from several perspectives, including uncertainty in whether clusters exist at all, uncertainty in the locations or shapes of cluster centroids, and uncertainty in the labels produced by a clustering algorithm. Uncertainty in centroids seems intuitive but is only applicable to centroid-based clustering methods, and while practical approaches have been suggested before \citep{kerr2001bootcluster, hofmans2015bootstrapkmeans,liu2018ukmeans}, we are not aware of theoretically-grounded uncertainty quantification of this type. Uncertainty in the existence of clusters has been studied and was first proposed by \citet{liu2008statistical} as SigClust, which tests for the existence of clusters against a null of a single Gaussian cluster; this approach has been extended to other settings \citep{huang2015statistical,kimes2017statistical,shen2024statistical}. Addressing similar types of uncertainty quantification, several works have recently studied clustering from a selective inference perspective to provide inference on the existence of clusters \citep{yun2023selective,chen2023selective,gao2024selective}. From a Bayesian perspective, uncertainty is often quantified over the entire partition through credible sets or credible balls under mixture models, providing an explicit ``region of plausible clusterings” rather than just labels \citep{wade2018bayesian,dahl2022search}. Thus, the existing literature is largely concentrated on cluster existence or model-specific structural uncertainty, and does not directly address uncertainty in the cluster labels themselves. In contrast, our goal is to quantify the uncertainty in the cluster labels, ideally in a model-agnostic and distribution-free manner.

There is currently no broadly applicable framework that provides valid, finite-sample inference for the output labels of a clustering algorithm in a model-agnostic manner. Uncertainty in cluster labels is fundamentally different from uncertainty in centroids or population-level structure, as cluster labels are discrete and permutation-invariant. Accordingly, label assignments cannot be treated as smooth or parametric objects whose variability is summarized by classical estimation error. Moreover, most clustering algorithms are deterministic and are not designed to produce meaningful out-of-sample predictions. Hence, a partition of the observed data returned by standard procedures does not naturally define a predictive rule for assigning labels to new points in a way that reflects uncertainty. Common ad hoc extensions—such as nearest-centroid assignments or auxiliary classifiers trained on cluster labels—are algorithm-dependent and lack formal inferential guarantees. As a result, cluster labels are tightly coupled to the specific dataset on which they are computed, limiting their interpretability and generalizability beyond the observed sample. These considerations suggest that uncertainty in clustering is better understood not as a property of a fixed sample partition, but as a property of a labeling rule defined over the feature space. This perspective motivates the development of a meta-algorithm that takes an arbitrary clustering procedure as input and produces confidence sets for cluster membership at any point in the data domain. Such prediction sets provide a natural and interpretable representation of uncertainty for any data point in the feature space, identifying regions where cluster assignment is reliable and regions where multiple labels remain plausible.

Recent developments in conformal inference offer a promising foundation for such an approach. Conformal inference provides distribution-free, finite-sample guarantees for predictive uncertainty with only the exchangeability assumption \citep{vovk2005algorithmic}. Most of the conformal inference literature has focused on regression \citep{lei2018distribution,romano2019conformalized} and classification \citep{lei2014classification,sadinle2019least,romano2020classification}, while conformal inference in unsupervised learning has received relatively little attention. The most prominent example in this scarce literature would be \citet{lei2013distribution}, which constructs distribution-free prediction sets for density estimation; these sets can be cut at a certain level to yield density-based or modal clusters, which are later extended to functional data and other domains \citep{lei2015conformal,adams2025functionalcad}. However, these approaches are not broadly applicable and are limited to settings where kernel density estimation performs well. Also, there has been a line of work characterizing outliers via conformal p-values \citep{bates2023testing,liang2024integrative,lee2025full}, with applications to streaming trajectories \citep{laxhammar2015inductive} and time series \citep{ishimtsev2017conformal,safin2017conformal}. While there have been related approaches under the name of ``conformal clustering'' \citep{cherubin2015conformal,nouretdinov2020multi,kiani2020conformalized}, we should note that they use conformal p-values to determine outliers and then cluster these outliers into groups, which has nothing to do with providing uncertainty quantification for clustering results. 

In this paper, we develop a principled framework for uncertainty quantification of clustering labels based on conformal inference, overcoming the unique challenges posed by the unsupervised nature of clustering. We construct confidence sets for cluster labels that can achieve valid marginal coverage, even when labels are estimated through an arbitrary clustering algorithm. Since the cluster labels are not observed but are instead estimated from the data itself, conformal procedures based on the estimated labels induce a form of distribution shift and break the exchangeability condition required for standard conformal guarantees. By employing stochastic clustering methods, we tackle the dependence induced by the estimated labels and provide a rigorous characterization of coverage guarantees. This leads to an interpretable and actionable procedure for practitioners, allowing them to identify regions of the data domain where cluster assignments are reliable and regions where multiple labels remain plausible. 

\begin{figure}[!htbp]
    \centering
    \includegraphics[width=0.7\textwidth]{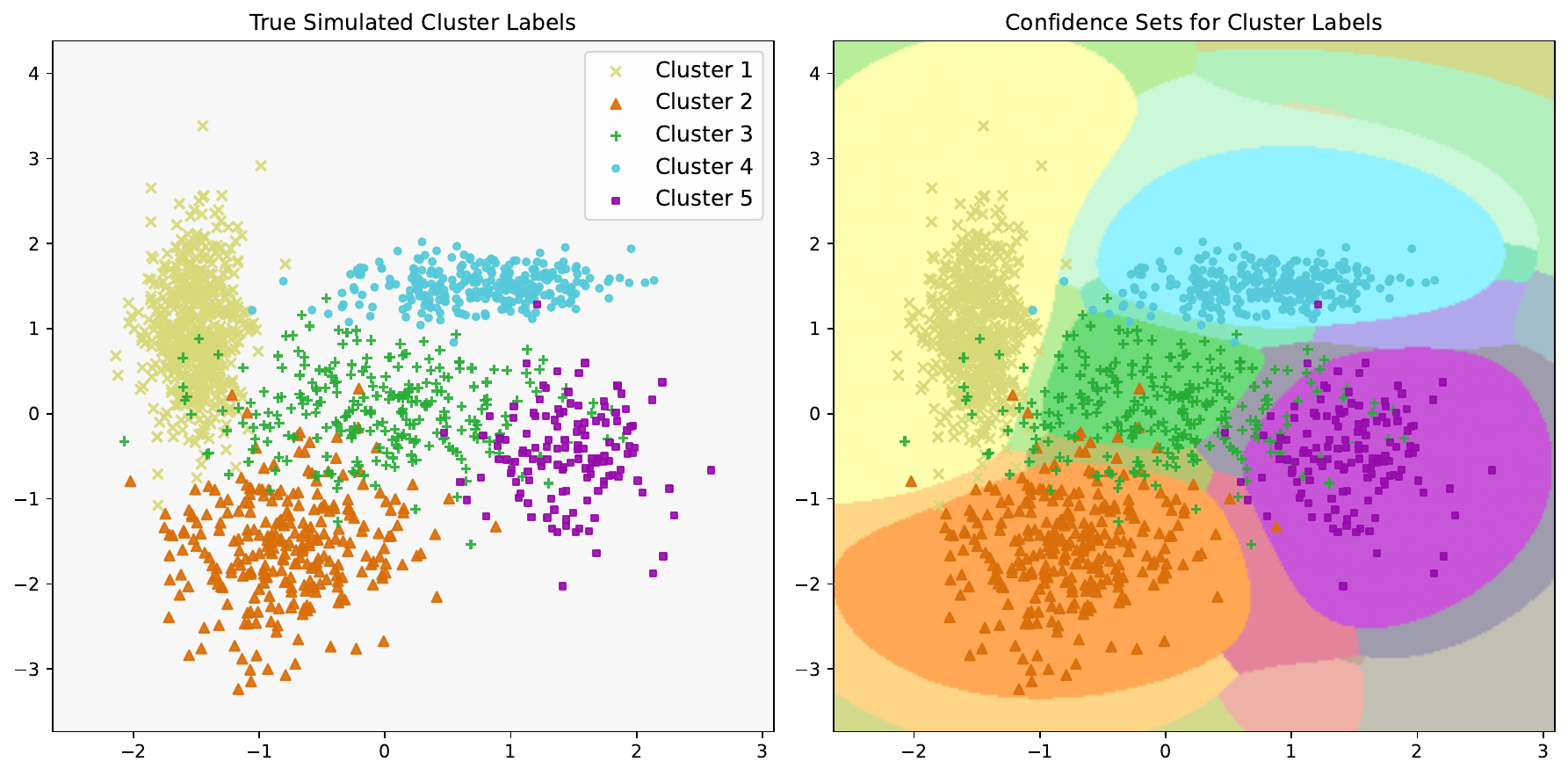}
    \caption{The left scatter plot shows $n = 1500$ data points with true cluster labels from a simulated example of a mixture of $K = 5$ Gaussian distributions on $\R^2$. The right panel shows the proposed $95\%$ confidence sets for cluster labels (Algorithm \ref{alg:split_conformal_clustering_stochastic} with stochastic GMM clustering), visualized as a heatmap over the data domain; colors correspond to those of the true cluster labels (same as the left plot), with in-between colors denoting regions where confidence sets contain two or more cluster labels.}
    \label{fig:motivating_figure}
\end{figure}

To motivate our approach, consider an illustrative simulation in Figure \ref{fig:motivating_figure}. Here, our valid confidence sets (Algorithm \ref{alg:split_conformal_clustering_stochastic} in Section \ref{sec:conformal_clustering_stochastic}) successfully highlight regions in the data domain where we are certain of the cluster labels, while the regions of uncertainty---the boundaries between clusters---that might correspond to two or more cluster labels at $95\%$ confidence are shown in-between colors. Overall, this work develops a unique framework leveraging conformal inference in unsupervised learning, one of the first of its kind in the literature, that addresses an important target of inference and develops a method that
has the potential for major improvement in the reliability and interpretability of clustering results.

The rest of the paper is organized as follows. Section \ref{sec:method} describes the proposed method for constructing conformal sets for cluster labels; we first describe two naive approaches and explain why they fail to provide valid coverage guarantees, which motivates our proposed method based on stochastic clustering. Section \ref{sec:theory} provides theoretical results on the coverage guarantees of our method, and Section \ref{sec:simulations} presents simulation studies and applications to single-cell data, which demonstrate the practical performance of our method. Section \ref{sec:discussion} concludes with a discussion of limitations and future directions. All technical proofs are deferred to the Supplementary Material, together with additional empirical results and further discussion on related literature.
\section{Conformal Sets for Cluster Labels}
\label{sec:method}
\paragraph*{Notation} For $n \in \N$, let $[n] = \{1, \ldots, n\}$, denote by $\cS_n$ the set of all permutations of $[n]$, and define $\Delta_n = \{w \in \R_+^n : \sum_{i = 1}^{n} w_i = 1\}$; also, for any $\sigma \in \cS_n$ and $C \subset [n]$, let $\sigma(C) = \{\sigma(c) : c \in C\}$. For $w \in \Delta_n$, let $\mathrm{Cat}(w)$ denote the categorical distribution on $[n]$ that draws $i \in [n]$ with probability $w_i$. For $u, v \in \R^K$, let $\|u - v\|_1 = \sum_{k = 1}^{K} |u_k - v_k|$. For $u, v \in \Delta_K$, let $H^2(u, v) = \frac{1}{2} \sum_{k = 1}^{K} (\sqrt{u_k} - \sqrt{v_k})^2$ denote the square of the Hellinger distance. Also, $\xrightarrow{p}$ denotes the convergence in probability, $o_p(1)$ denotes a sequence of random variables converging to zero, and $O_p(1)$ denotes a sequence bounded in probability.

\subsection{Problem Formulation}
\label{sec:problem_formulation}
Given $X_1, \ldots, X_n \in \R^p$, clustering groups the data into clusters with labels $Y_1, \ldots, Y_n \in [K]$, where $K$ is the number of clusters. Here, we assume $K$ is fixed and known. We postulate a probabilistic setting where these covariates are in fact generated with true labels, say, $(X_1, Y_1^\ast), \ldots, (X_n, Y_n^\ast)$ independently drawn from some distribution $P^\ast$ on $\R^p \times [K]$, where the true cluster labels $Y_i^\ast$'s are unobserved. 

Our goal is to construct some confidence set, $\hat{\cC}(x) \subset \{1, \ldots, K\}$ for any $x \in \R^p$, such that the set $\hat{\cC}(X_{n + 1})$ contains the true label $Y_{n + 1}^\ast$ with high probability for a new test point $(X_{n + 1}, Y_{n + 1}^\ast) \sim P^\ast$ independent of $\{(X_i, Y_i^\ast)\}_{i = 1}^{n}$. To rigorously state this, however, we note that cluster labels are invariant to permutations. Hence, we define the oracle cluster label permutation as $\hat{\sigma}_o^\ast = \argmax_{\sigma \in \cS_K} \P\left(\sigma(Y^\ast) \in \hat{\cC}(X) \, \big| \, \hat{\cC}\right)$, where $(X, Y^\ast) \sim P$ is independent of the randomness of $\hat{\cC}$.\footnote{The main source of the randomness in $\hat{\cC}$ is $\{X_i\}_{i = 1}^{n}$, but it may include additional randomness depending on the clustering algorithm, as we will see in Section \ref{sec:conformal_clustering_stochastic}.} Thus, $\hat{\sigma}_o^\ast$ is the oracle alignment of $\hat{\cC}$ to the ground truth labels evaluated on the new data $(X, Y^\ast)$ independent of the construction of $\hat{\cC}$. Now, with this oracle permutation, our goal is to construct a $\hat{\cC}$ such that
\begin{equation}
    \label{eq:target_coverage_bound}
    \P\left(\hat{\sigma}_o^\ast(Y_{n + 1}^\ast) \in \hat{\cC}(X_{n + 1})\right) \ge 1 - \alpha
\end{equation}
for a given level $\alpha \in (0, 1)$. In other words, our goal is to construct a confidence set that, with at least $1 - \alpha$ probability, will cover the true, unknown cluster labels. These confidence sets will help us quantify the uncertainty in cluster labels. For instance, if $\hat{\cC}(x)$ contains multiple labels, it indicates that the cluster label for $x$ is uncertain and could be any of those labels in $\hat{\cC}(x)$; if $\hat{\cC}(x)$ contains only one label, it signals that the cluster label for $x$ is more certain.

\subsection{Naive Approach I: Cutoff for Generalizable Clustering}
\label{sec:cutoff}
For certain types of clustering algorithms, such as parametric mixture model clustering, we can obtain the soft cluster label for any input $x \in \R^p$, namely, the posterior probability of $x$ belonging to each cluster. Then, it is natural to ask whether we can directly use these soft labels to construct confidence sets for cluster labels. Particularly, one can simply define the confidence set with cluster labels having the largest probability so that the cumulative probability exceeds $1 - \alpha$. We formally call such clustering algorithms generalizable soft clustering and summarize this method---called cutoff---in Algorithm \ref{alg:cutoff}.

\begin{definition}
    \label{def:generalizable_soft_clustering}
    We call $\gamma$ a generalizable soft clustering algorithm if it takes a sequence $x_1, \ldots, x_n \in \R^p$ as input and outputs a function $\hat{\gamma}_n \colon \R^p \to \Delta_K$ such that $\hat{\gamma}_n(x) = (\hat{\gamma}_n(x)_1, \ldots, \hat{\gamma}_n(x)_K)$ is a probability vector representing the soft cluster label for $x$, where the $k$-th entry $\hat{\gamma}_n(x)_k$ denotes the probability of $x$ belonging to cluster $k$.
\end{definition}

\renewcommand{\thealgorithm}{0.\arabic{algorithm}} % Force 0.x format
\begin{algorithm}[!htbp]
    \caption{Cutoff Set for Generalizable Soft Clustering}
    \label{alg:cutoff}
    \begin{algorithmic}[1]
        \REQUIRE Unlabeled data $X_1, \ldots, X_n \in \R^p$ and user-specified level $\alpha \in (0, 1)$.
        \REQUIRE Generalizable soft clustering $\gamma$.
        \STATE Implement $\gamma$ with input $X_1, \ldots, X_n$: obtain $\hat{\gamma}_n \colon \R^p \to \Delta_K$.
        \STATE For the soft label $\hat{\gamma}_n(x)$, let $\hat{\gamma}_n(x)_{(1)} \ge \cdots \ge \hat{\gamma}_n(x)_{(K)}$ be the order statistics of the entries of $\hat{\gamma}_n(x)$, with the corresponding permutation $\mathrm{rk}_x \in \cS_K$ denoting the ranks so that $\hat{\gamma}_n(x)_k = \hat{\gamma}_n(x)_{(\mathrm{rk}_x(k))}$ for $k \in [K]$.
        \STATE Define the cutoff threshold $\mathrm{cut}_x := \min\left\{k \in [K] : \sum_{\ell = 1}^{k} \hat{\gamma}_n(x)_{(\ell)} \ge 1 - \alpha\right\}$ and \vspace*{-10pt}
        \begin{equation*}
            \hat{\cC}(x) = \Big\{y \in [K] : \mathrm{rk}_x(y) \le \mathrm{cut}_x\Big\}. \vspace*{-25pt}
        \end{equation*}
        \RETURN $\hat{\cC}$.
    \end{algorithmic}
\end{algorithm}

Despite its simplicity, it turns out that Algorithm \ref{alg:cutoff} is not guaranteed to achieve the desired coverage guarantee \eqref{eq:target_coverage_bound}; a similar phenomenon for soft classifiers is discussed in \citet{angelopoulos2020uncertainty}. Intuitively, the soft labels obtained from the clustering algorithm are estimated from the data and may not be well calibrated, which can lead to under-coverage when the sample size is small. On the other hand, when the sample size is large enough, it is prone to over-coverage with uninformative confidence sets, larger than we would like. To see this, imagine a point whose largest soft label is $0.8$, which is indeed the true cluster label, but there are four other soft labels that are all $0.025$. Despite the strong signal in the largest soft label, the cutoff method has to include all four labels in the confidence set to achieve the target coverage of $0.9$, which inflates the set size more than necessary. We will confirm this intuition both theoretically and empirically in later sections.

\subsection{Background: Conformal Prediction}
\label{sec:background_conformal_prediction}
What is missing for the above cutoff method is a rigorous way to calibrate the soft labels obtained from the clustering algorithm, which is crucial to achieve the desired coverage guarantee with informative confidence sets. Also, it is only applicable to generalizable soft clustering algorithms, which limits its applicability.
To address these issues, we take inspiration from conformal prediction for classification, which provides a general framework for constructing distribution-free confidence sets with guaranteed coverage. Notice that once we have the true labels $Y_i^\ast$'s, the clustering problem stated in Section \ref{sec:problem_formulation} can be viewed as a classification problem with covariates $X_i$'s and labels $Y_i^\ast$'s. Conformal classification constructs prediction set $\hat{\cC}(x) \subset [K]$ for any $x \in \R^p$ such that given a level $\alpha \in (0, 1)$, we have $\P(Y_{n + 1}^\ast \in \hat{\cC}(X_{n + 1})) \ge 1 - \alpha$, where $(X_{n + 1}, Y_{n + 1}^\ast) \sim P^\ast$ is a new independent test point.

Split conformal prediction is the standard routine, which randomly divides labeled data into training and calibration samples, fits a soft classifier $\hat{\pi} \colon \R^p \to \Delta_K$ on the training set, and uses the calibration sample to calibrate conformity scores. One widely used score is the generalized inverse quantile score of \citet{romano2020classification}: for $(x, y) \in \R^p \times [K]$, define $s((x, y); \hat{\pi}) = \hat{\pi}(x)_{(0)} + \cdots + \hat{\pi}(x)_{(r(y) - 1)}$, where $\hat{\pi}(x)_{(1)} \ge \cdots \ge \hat{\pi}(x)_{(K)}$ are the order statistics of the entries of $\hat{\pi}(x)$, and $r(y)$ is the rank of the $y$-th entry among the entries of $\hat{\pi}(x)$, and $\hat{\pi}(x)_{(0)} := 0$ for convenience.\footnote{The original definition of the generalized inverse quantile score in \citet{romano2020classification} incorporates an additional randomization term, but we omit it here for simplicity.} The resulting prediction set contains labels whose scores fall below an empirical quantile from the calibration sample. Validity of conformal inference follows from the exchangeability of the labeled data, regardless of the choice of the classifier or conformity scores. As we will see, however, exchangeability is violated in the clustering setting because the cluster labels have to be estimated from the data. 

In this regard, one may naturally draw connections to the recent developments in the conformal classification literature with noisy labels \citep{einbinder2024labelnoisejmlr,sesia2025adaptivejrssb,feldman2025conformal}. However, these works rely critically on the assumption that the label corruption mechanism is (conditionally) independent of the covariates. This assumption fails in our setting, where the labels are estimated from the data and hence are inherently covariate-dependent. Likewise, developments in conformal prediction for semi-supervised learning \citep{zhou2025semi} are not directly applicable to our framework.

\subsection{Naive Approach II: Split Conformal Clustering}
\label{sec:conformal_clustering_naive}
We now describe a naive approach for constructing confidence sets for cluster labels, which we call split conformal clustering, summarized in Algorithm \ref{alg:split_conformal_clustering}. It follows the general idea of conformal classification, but with the twist that we do not have access to the true labels. Instead, we obtain cluster labels from the data by adopting the idea of generalizability that leverages methods of clustering prediction strength \citep{tibshirani2005cluster,lange2004stability}. Concretely, we start by randomly splitting the data into a training and calibration set and then cluster each of these sets separately to obtain cluster labels. Then, on the training set, we propose to treat the cluster labels as if they were the truth and build a classifier on the training set to predict soft labels on the calibration set. The classifier allows for prediction on any input point, including the calibration points and the new test point. Since the cluster labels from the training and calibration sets are obtained separately, they may differ by a permutation. Hence, we align the predictions and cluster labels on the calibration set by finding the best permutation that minimizes the disagreement between the labels; this is a linear assignment problem that can be solved in polynomial time \citep{burkard_dellamico_martello_2009}. Finally, we use conformity scores to compare the calibration cluster labels to the calibration soft predictions and construct the confidence set, including those labels whose scores are below the appropriate quantile of the calibration scores.

\begin{algorithm}[!htbp]
    \caption{Split Conformal Clustering (Naive)}
    \label{alg:split_conformal_clustering}
    \begin{algorithmic}[1]
    \REQUIRE Unlabeled data $X_1, \ldots, X_n \in \R^p$ and user-specified level $\alpha \in (0, 1)$.
    \STATE Split the data into $\{X_i\}_{i \in \cI_{tr}}$ and $\{X_i\}_{i \in \cI_{ca}}$ for some partition $\cI_{tr} \cup \cI_{ca} = [n]$.
    \STATE Cluster the training data $\{X_i\}_{i \in \cI_{tr}}$ and obtain the corresponding labels $\{Y_i\}_{i \in \cI_{tr}}$.
    \STATE Fit a soft classifier $\hat{\pi} \colon \R^p \to \Delta_K$ on the cluster-labeled training data $\{(X_i, Y_i)\}_{i \in \cI_{tr}}$.
    \STATE Cluster the calibration data $\{X_i\}_{i \in \cI_{ca}}$ and obtain the corresponding labels $\{Y_i\}_{i \in \cI_{ca}}$.
    \STATE Align the cluster and classification labels: $\hat{\sigma} \in \argmin_{\sigma \in \cS_K} ~ \sum_{i \in \cI_{ca}} 1\{\hat{f}(X_i) \neq \sigma(Y_i)\}$, where $\hat{f}(x) = \argmax_{k \in [K]} \hat{\pi}(x)_k$ is the classification rule based on $\hat{\pi}$. 
    \STATE Define suitable conformity scores $s_i := s((X_i, \hat{\sigma}(Y_i)); \hat{\pi}) \in \R$ for $i \in \cI_{ca}$ and construct \vspace*{-10pt}
    \begin{equation*}
        \hat{\cC}(x) = \Big\{y \in [K] : s((x, y); \hat{\pi}) \le \lceil (1 - \alpha) (1 + |\cI_{ca}|)\rceil \text{-th smallest value of} ~ \{s_i\}_{i \in \cI_{ca}}\Big\}. \vspace*{-15pt}
    \end{equation*}
    \RETURN $\hat{\cC}$.
\end{algorithmic}
\end{algorithm}

Steps 2 and 4 of Algorithm \ref{alg:split_conformal_clustering} require a clustering algorithm to assign labels on the training and calibration sets. Although any clustering algorithm can in principle be used, standard methods that produce hard labels are inadequate. Their deterministic assignments do not reflect the true uncertainty in cluster membership and can therefore lead to under-coverage, which we show theoretically in the subsequent section. Thus, we view Algorithm \ref{alg:split_conformal_clustering} as a naive baseline, and introduce our main algorithm in the next section.

\subsection{Our Approach: Split Conformal Clustering with Stochastic Labels}
\label{sec:conformal_clustering_stochastic}
The limitations of the above two naive approaches clearly manifest the challenges in constructing confidence sets for cluster labels, motivating us to develop well-calibrated methods that can better capture the uncertainty in cluster labels. To this end, we consider stochastic clustering, which seeks to insert the true uncertainty in the cluster labels back into the labeling mechanism. By stochastic clustering, we refer to a clustering algorithm that incorporates randomness in the clustering process such that the cluster labels are not deterministic but rather sampled from some distribution. More precisely, from any soft clustering algorithm that assigns probability vectors to the input points, such as parametric mixture model clustering or fuzzy-c-means \citep{dunn1973fuzzy,bezdek1984fcm}, we can obtain a stochastic clustering algorithm by sampling the cluster labels from these assigned probability vectors. 

\begin{definition}
    \label{def:stochastic_clustering}
    We call $\gamma$ a stochastic clustering algorithm if it takes a sequence $x_1, \ldots, x_n \in \R^p$ as input, obtains probability vectors $\hat{\gamma}_n(x_1), \ldots, \hat{\gamma}_n(x_n) \in \Delta_K$, and outputs the corresponding labels by sampling $y_i \sim \mathrm{Cat}(\hat{\gamma}_n(x_i))$ for $i \in [n]$ independently.\footnote{Technically, $\hat{\gamma}_n$ is a map from $\{x_1, \ldots, x_n\}$ to $\Delta_K$. If we use generalizable soft clustering (Definition \ref{def:generalizable_soft_clustering}), then $\hat{\gamma}_n$ is a map from $\R^p$ to $\Delta_K$, generalizing outside of $\{x_1, \ldots, x_n\}$.}
\end{definition}

\setcounter{algorithm}{0} % Reset algorithm counter to 0
\renewcommand{\thealgorithm}{\arabic{algorithm}} % Restore normal numbering format
\begin{algorithm}[!htbp]
    \caption{Split Conformal Clustering with Stochastic Labels}
    \label{alg:split_conformal_clustering_stochastic}
    \begin{algorithmic}[1]
    \REQUIRE Unlabeled data $X_1, \ldots, X_n \in \R^p$ and user-specified level $\alpha \in (0, 1)$.
    \STATE Split the data into $\{X_i\}_{i \in \cI_{tr}}$ and $\{X_i\}_{i \in \cI_{ca}}$ for some partition $\cI_{tr} \cup \cI_{ca} = [n]$.
    \STATE Apply stochastic clustering to the training data $\{X_i\}_{i \in \cI_{tr}}$ and sample the corresponding labels $\{Y_i\}_{i \in \cI_{tr}}$. 
    \STATE Fit a soft classifier $\hat{\pi} \colon \R^p \to \Delta_K$ on the cluster-labeled training data $\{(X_i, Y_i)\}_{i \in \cI_{tr}}$.
    \STATE Apply stochastic clustering to the calibration data $\{X_i\}_{i \in \cI_{ca}}$ and sample the corresponding labels $\{Y_i\}_{i \in \cI_{ca}}$. 
    \STATE Align the cluster and classification labels: $\hat{\sigma} \in \argmin_{\sigma \in \cS_K} ~ \sum_{i \in \cI_{ca}} 1\{\hat{f}(X_i) \neq \sigma(Y_i)\}$, where $\hat{f}(x) = \argmax_{k \in [K]} \hat{\pi}(x)_k$ is the classification rule based on $\hat{\pi}$.
    \STATE Define suitable conformity scores $s_i := s((X_i, \hat{\sigma}(Y_i)); \hat{\pi}) \in \R$ for $i \in \cI_{ca}$ and construct \vspace*{-10pt}
    \begin{equation*}
        \hat{\cC}(x) = \Big\{y \in [K] : s((x, y); \hat{\pi}) \le \lceil (1 - \alpha) (1 + |\cI_{ca}|)\rceil \text{-th smallest value of} ~ \{s_i\}_{i \in \cI_{ca}}\Big\}. \vspace*{-15pt}
    \end{equation*}
    \RETURN $\hat{\cC}$.
\end{algorithmic}
\end{algorithm}

Algorithm \ref{alg:split_conformal_clustering_stochastic} is a special case of Algorithm \ref{alg:split_conformal_clustering} where Steps 2 and 4 are implemented with a stochastic clustering algorithm. However, the stochasticity in the clustering process is crucial to capture the uncertainty in cluster labels. Since the sampled cluster labels better conform to the true cluster label distribution, we expect this approach to enjoy improved coverage compared to the deterministic cluster labels in Algorithm \ref{alg:split_conformal_clustering}. We confirm this key insight, both theoretically and empirically, in the later sections. 

\begin{remark}
    \label{rmk:classifier_necessity}
    One may wonder whether the classifier in Step 3 of Algorithm \ref{alg:split_conformal_clustering_stochastic} is necessary. In Algorithm \ref{alg:split_conformal_clustering} with hard clustering, this step is necessary to allow for prediction on any input point, including the calibration points and the new test point. However, if Algorithm \ref{alg:split_conformal_clustering_stochastic} is used with a generalizable soft clustering algorithm as defined in Definition \ref{def:generalizable_soft_clustering}, then we may skip the classifier by simply setting $\hat{\pi}(x) = \hat{\gamma}_{tr}(x)$ for any $x \in \R^p$, where $\hat{\gamma}_{tr}$ is fitted on $\{X_i\}_{i \in \cI_{tr}}$. That said, in practice, many commonly used stochastic clustering algorithms, such as fuzzy-c-means, may not satisfy this generalizability condition, so we need to train a separate classifier to allow for prediction on any input point, including the calibration points and the new test point. Even for generalizable clustering such as mixture model clustering, training a separate classifier can help improve the efficiency of the confidence sets by providing better soft predictions with more flexible cluster boundaries.
\end{remark}

\paragraph*{Practicalities} 
In practice, one needs to know the number of clusters $K$ before applying our split conformal stochastic clustering algorithm. However, choosing $K$ is a well-known challenge \citep{sugar2003finding}. We suggest that a practitioner should use well-established methods appropriate to their data to select $K$. These could include the Silhouette score \citep{Rousseeuw1987}, stability-based approaches \citep{ben2001stability}, or, more closely related to our framework, generalizability approaches \citep{tibshirani2005cluster,lange2004stability}. \citet{chang2025unsupervised} recently outlined a practical workflow for choosing $K$, among other items, and we refer the reader to this work to address how to choose $K$ or any other clustering hyperparameter in practice.

\section{Theoretical Guarantees}
\label{sec:theory}

For the confidence set $\hat{\cC}$ constructed by Algorithm \ref{alg:split_conformal_clustering_stochastic}, we first derive a finite-sample lower bound on the coverage $\P(\hat{\sigma}_o^\ast(Y_{n + 1}^\ast) \in \hat{\cC}(X_{n + 1}))$, which depends not only on the level $1 - \alpha$ but also on two properties of the input clustering algorithm: consistency and stability. The former characterizes the estimation error of the response probability of cluster labels given a covariate, while the latter represents how much the clustering result changes if we perturb one data point. After defining these concepts and establishing the coverage bound, we show that the desired coverage bound \eqref{eq:target_coverage_bound} holds asymptotically under mild conditions that are satisfied in well-specified settings. 

\paragraph*{Notation} Recall from Section \ref{sec:problem_formulation} that $P^\ast$ is the underlying distribution on $\R^p \times [K]$. Let $P_X^\ast$ be the marginal distribution on $\R^p$ and $\gamma^\ast(x) = (\P(Y^\ast = 1 \, | \, X = x), \ldots, \P(Y^\ast = K \, | \, X = x)) \in \Delta_K$ be the conditional probability vector of $Y^\ast$ given $X$ for $(X, Y^\ast) \sim P^\ast$.

\subsection{Coverage Bound}
We will show that the lower bound on the coverage $\P(\hat{\sigma}_o^\ast(Y_{n + 1}^\ast) \in \hat{\cC}(X_{n + 1}))$ depends on two properties of the clustering algorithm used in Algorithm \ref{alg:split_conformal_clustering_stochastic}: consistency and stability. To understand these concepts, notice that if the cluster-labeled data obtained in Algorithm \ref{alg:split_conformal_clustering_stochastic} approximately follow the underlying distribution $P^\ast$, Algorithm \ref{alg:split_conformal_clustering_stochastic} will perform like split conformal classification, which enjoys the usual $1 - \alpha$ guarantee. The deviation of the cluster-labeled data from $P^\ast$ can be measured by the estimation error of the response probability of cluster labels given a covariate and the stability of the clustering result when one data point is perturbed. We formalize these concepts in the following definitions.

\begin{definition}[Consistency of Clustering]
    \label{def:consistency}
    Define the estimation error of a stochastic clustering algorithm $\gamma$ given a sample of size $n$ to be 
    \begin{equation*}
        \mathsf{E}_n(\gamma) := \frac{1}{n} \sum_{i = 1}^{n} \E\left[\|\hat{\gamma}_n(X_i) - \gamma^\ast(X_i)\|_1\right],
    \end{equation*}
    where $\hat{\gamma}_n$ is fitted on $X_1, \ldots, X_n$. We say $\gamma$ is consistent if $\lim_{n \to \infty} \mathsf{E}_n(\gamma) = 0$.
\end{definition}

In Definition \ref{def:consistency}, the term $\|\hat{\gamma}_n(X_i) - \gamma^\ast(X_i)\|_1$ is the total variation distance between the obtained cluster label distribution $\gamma$ and the true label distribution $\gamma^\ast$, and $\mathsf{E}_n(\gamma)$ averages this over all inputs and takes the expectation. When the underlying distribution indeed has a well-separated cluster structure, a large sample will approximate the true distribution well, thereby having a small estimation error and thus small $\mathsf{E}_n(\gamma)$ for a certain choice of $\gamma$. In the next section, we will analyze cases where such consistency holds.

We also introduce replace-one stability. Conformal inference relies on exchangeability. If the cluster-labeled data together with the test point $(X_{n + 1}, Y_{n + 1}^\ast) \sim P^\ast$ are exchangeable, then Algorithm \ref{alg:split_conformal_clustering_stochastic} enjoys the usual $1 - \alpha$ coverage like split conformal classification. While exchangeability is not guaranteed in clustering settings, we can quantify the extent to which the cluster-labeled data are exchangeable by the above estimation error and the following stability term.

\begin{definition}[Replace-One Stability of Clustering]
    \label{def:replace-one_stability}
    Define the replace-one stability of a stochastic clustering algorithm $\gamma$ given a sample of size $n$ to be 
    \begin{equation*}
        \mathsf{S}_n(\gamma) := \frac{1}{n} \sum_{i = 1}^{n} \E\left[\Big\|\motimes_{j \in [n] \backslash \{i\}} \hat{\gamma}_n(X_j) - \motimes_{j \in [n] \backslash \{i\}} \hat{\gamma}_{i \to n + 1}(X_j)\Big\|_1\right],
    \end{equation*}
    where $\hat{\gamma}_n$ is fitted on $X_1, \ldots, X_n$ and $\hat{\gamma}_{i \to n + 1}$ is fitted on $X_1, \ldots, X_{i - 1}, X_{n + 1}, X_{i + 1}, \ldots, X_n$. We say $\gamma$ is asymptotically replace-one invariant if $\lim_{n \to \infty} \mathsf{S}_n(\gamma) = 0$.
\end{definition}

The replace-one stability essentially measures the change in the joint distributions of the cluster labels when one data point is replaced by another. To understand this, recall that the joint distribution of the labels $Y_1, \ldots, Y_{n - 1}$ is the product of $\mathrm{Cat}(\hat{\gamma}_n(X_1)), \ldots, \mathrm{Cat}(\hat{\gamma}_n(X_{n - 1}))$. If we replace the last data point $X_n$ with another point $X_{n + 1}$, the clustering results change, and the joint distribution of $Y_1, \ldots, Y_{n - 1}$ is the product of $\mathrm{Cat}(\hat{\gamma}_{n \to n + 1}(X_1)), \ldots, \mathrm{Cat}(\hat{\gamma}_{n \to n + 1}(X_{n - 1}))$. The total variation distance between these two joint distributions of the labels is essentially the $n$-th summand of $\mathsf{S}_n(\gamma)$. Conceptually, a small $\mathsf{S}_n$ means that a single input data point has a low influence on the clustering result, which is expected, especially when the sample size $n$ is large. Asymptotic replace-one invariance means that the influence of a single data point becomes negligible as the number of input points grows to infinity, and we will analyze certain cases where this holds in the next section.

Given these definitions, the following theorem provides a finite-sample bound on the coverage based on consistency and stability. 

\begin{theorem}
    \label{thm:coverage_bound}
    Suppose $X_1, \ldots, X_n \in \R^p$ are i.i.d.\ from $P_X^\ast$. Let $\hat{\cC}$ be the output of Algorithm \ref{alg:split_conformal_clustering_stochastic} with $X_1, \ldots, X_n$ as input, where we use stochastic clustering $\gamma$ and randomly split the data with $|\cI_{tr}| = |\cI_{ca}| = \frac{n}{2}$. Then, for $(X_{n + 1}, Y_{n + 1}^\ast) \sim P^\ast$ independent of $\{X_i\}_{i = 1}^{n}$, we have
    \begin{equation}
        \label{eq:coverage_bound_split}
        \P\left(\hat{\sigma}_o^\ast(Y_{n + 1}^\ast) \in \hat{\cC}(X_{n + 1})\right) \ge 1 - \alpha - \frac{n}{n + 2} \mathsf{E}_{n / 2}(\gamma) - \frac{n}{2 (n + 2)} \mathsf{S}_{n / 2}(\gamma).
    \end{equation}
\end{theorem}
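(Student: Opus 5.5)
The plan is to condition on the training half and reduce everything to a rank argument over the $m+1$ points $\cI_{ca}\cup\{n+1\}$, where $m=n/2$. Conditioning on the training data and its stochastic labels fixes $\hat\pi$ and the score $s(\cdot;\hat\pi)$. The calibration sample is then $m$ i.i.d.\ draws from $P_X^\ast$, with labels $Y_i\sim\mathrm{Cat}(\hat\gamma_m(X_i))$ and $\hat\gamma_m$ fitted on $\{X_i\}_{i\in\cI_{ca}}$. The quantities $m/(m+1)=n/(n+2)$ and $m/(2(m+1))=n/(2(n+2))$ already match the coefficients in \eqref{eq:coverage_bound_split}. This suggests a bound that averages $m$ perturbation terms over $m+1$ rank positions, each term being a total variation distance, i.e.\ half an $\ell_1$ distance.

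\textbf{Step 1 (removing the oracle).} By definition of $\hat\sigma_o^\ast$, for any permutation $\tau$ that is a measurable function of the randomness of $\hat\cC$ we have $\P(\hat\sigma_o^\ast(Y^\ast)\in\hat\cC(X)\mid\hat\cC)\ge\P(\tau(Y^\ast)\in\hat\cC(X)\mid\hat\cC)$. The alignment $\hat\sigma$ is such a $\tau$, since it is determined by the data and labels used to build $\hat\cC$. So it suffices to lower bound $\P(s_{n+1}\le \hat q)$, where $s_{n+1}=s((X_{n+1},\hat\sigma(Y^\ast_{n+1}));\hat\pi)$ and $\hat q$ is the $\lceil(1-\alpha)(m+1)\rceil$-th smallest of $\{s_i\}_{i\in\cI_{ca}}$. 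Since $\hat\sigma$ depends on the calibration labels, it must be carried inside every coupling below. It is a symmetric function of the labelled calibration points, and I will check this symmetry is preserved when one point is swapped.

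\textbf{Step 2 (exchangeable reference and swap coupling).} Let $A_j$ be the event that point $j$'s score is at most the $\lceil(1-\alpha)(m+1)\rceil$-th smallest score among the $m+1$ points. For any exchangeable labelled $(m+1)$-sample, $\frac{1}{m+1}\sum_j \P(A_j)\ge 1-\alpha$; this is the standard split-conformal argument. The real configuration has the test point labelled by $\gamma^\ast$ and the others by $\gamma$ fitted on the other $m$ points. For each $i\in\cI_{ca}$, I couple this real configuration with the configuration obtained by swapping the roles of $X_i$ and $X_{n+1}$. Exchangeability of the $X$'s makes the swapped configuration have the same law as the real one. Two labelling mechanisms differ between the two configurations:
\begin{itemize}
\item[(a)] The point placed in the test role is labelled by $\gamma^\ast$ in one and by the in-sample $\hat\gamma_m$ in the other. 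This costs at most $\E\|\hat\gamma_m(X_i)-\gamma^\ast(X_i)\|_1$ in total variation, and averaging over $i$ produces $\mathsf{E}_m(\gamma)$.
\item[(b)] The remaining $m-1$ calibration labels are drawn from the product $\motimes_{j\neq i}\hat\gamma_m(X_j)$ versus $\motimes_{j\ne i}\hat\gamma_{i\to n+1}(X_j)$. This costs one half of the $i$-th summand of $\mathsf{S}_m(\gamma)$.
\end{itemize}
Summing over the $m$ calibration indices, dividing by $m+1$, and leaving the $(n+1)$-th index unperturbed then gives $1-\alpha-\frac{m}{m+1}\mathsf{E}_m-\frac{m}{2(m+1)}\mathsf{S}_m$. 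Because the score event depends on all labels only through a bounded indicator, each change of law costs at most the corresponding total variation distance.

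\textbf{Main obstacle.} The delicate step is organising the hybrid so that, for each $i$, the event $A_i$ under the real scheme is compared through a single chain of couplings to $A_{n+1}$. The consistency error must appear in-sample, at $X_i$ with $\hat\gamma_m$ fitted on a set containing $X_i$, and not out-of-sample at $X_{n+1}$. The stability error must appear exactly in the replace-one form of Definition~\ref{def:replace-one_stability}. Getting the factor $1$ on $\mathsf{E}$ rather than $1/2$ may require accounting for the test-point label in both directions of the swap, or noting that the $\ell_1$ distance already dominates twice the total variation distance on the single coordinate where the labels differ. My first attempt is to write the coverage as $\frac{1}{m+1}\sum_{j}\P(A_j^{(j)})$ over schemes indexed by the test position $j$. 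I would then telescope each scheme to the one in which all $m+1$ labels are drawn from $\gamma^\ast$, whose labelled sample is i.i.d.\ and hence exchangeable. The bookkeeping would be checked against the stated constants.
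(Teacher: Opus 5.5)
Your Step 2 is essentially the paper's proof. The route you propose in the last paragraph for the ``delicate step'' would not yield the theorem, however. The paper invokes the nonexchangeable bound of Barber et al.\ (Theorem~7.12 of Angelopoulos et al.): coverage is at least $1-\alpha-\frac{1}{m+1}\sum_{i\in\cI_{ca}}\mathsf{TV}(\mu,\mu^{i\leftrightarrow n+1})$ for the score vector $R$. This is exactly your counting-plus-swap argument, and it needs no exchangeable reference sample: $\sum_j 1\{A_j\}\ge(1-\alpha)(m+1)$ holds deterministically, and $A_i(R)=A_{n+1}(\tau_i R)$, where $\tau_i$ swaps coordinates $i$ and $n+1$.

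The paper then passes to labelled pairs by data processing, conditions on the covariate vector (whose law is swap-invariant; the labelled configuration's law is not), and uses subadditivity of $\mathsf{TV}$ over the conditional product law. The factor $1$ on $\mathsf{E}$ comes from \emph{two} differing coordinates, not one. The covariate in slot $i$ is labelled by the in-sample $\hat\gamma_m$ in one law and by $\gamma^\ast$ in the other. The covariate in slot $n+1$ is labelled by $\gamma^\ast$ versus the in-sample refit $\hat\gamma_{i\to n+1}$. Each costs $\frac12\|\cdot\|_1$, and the two coincide in expectation after relabelling $X_i\leftrightarrow X_{n+1}$.

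Telescoping each scheme to the all-$\gamma^\ast$ scheme instead changes the label law on all $m$ estimated coordinates at once. That costs up to $\frac12\sum_{l\in\cI_{ca}}\|\hat\gamma_m(X_l)-\gamma^\ast(X_l)\|_1$, i.e.\ order $m\,\mathsf{E}_m$ rather than $\frac{m}{m+1}\mathsf{E}_m$, which is vacuous when $\mathsf{E}_m$ is of order $m^{-1/2}$ as in the parametric case. It also never produces $\mathsf{S}_m$. The point of the swap is that the $m-1$ untouched labels are compared with a replace-one refit, never with the truth. So drop the telescoping and execute Step 2 directly.

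The second problem is the alignment. $\hat\sigma$ is invariant to permutations of the $m$ calibration pairs, but it depends on $(X_i,Y_i)$ and not on the test pair, so in the swapped configuration it would be recomputed from a different set. Hence $\tau_i R$, which keeps the original $\hat\sigma$, is not the score map applied to the swapped labelled configuration. Data processing therefore does not bound $\mathsf{TV}(R,\tau_i R)$ by the label-law distance, and the symmetry check you promise will fail.

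The paper does not resolve this either. It asserts that $\mu^{i\leftrightarrow n+1}$ is the pushforward of $\rho^{i\leftrightarrow n+1}$ under the same map $S$, whose test slot even uses $\hat\sigma_o^\ast$. That requires the relabelling to treat calibration and test coordinates symmetrically, e.g.\ a relabelling frozen along with $\hat\pi$ in its pretrained reduction. To land on exactly the stated constants you must make that simplification explicit. With a genuinely data-dependent $\hat\sigma$ you should expect an extra term, such as the probability that replacing one calibration pair changes $\hat\sigma$.

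Your Step 1 is valid: it dominates the oracle by the data-measurable $\hat\sigma$, so calibration and test scores share one relabelling. That is cleaner than the paper's treatment and is the natural place to make this reduction.
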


Theorem~\ref{thm:coverage_bound} quantifies the under-coverage of our Conformal Clustering approach via two terms: consistency and stability. If these terms are small, the cluster-labeled data produced in Algorithm \ref{alg:split_conformal_clustering_stochastic} are comparable to samples from $P^\ast$, which leads to the desired $1 - \alpha$ coverage. This result also highlights the two key clustering ingredients that will ensure valid finite-sample coverage, thus providing a roadmap for future investigations.

\begin{remark}[Pitfall of Naive Split Conformal Clustering]
    \label{rmk:naive_split_conformal_clustering_undercoverage}
    While the main focus of Theorem \ref{thm:coverage_bound} is on Algorithm \ref{alg:split_conformal_clustering_stochastic}, which requires a stochastic clustering algorithm, the coverage bound \eqref{eq:coverage_bound_split} actually applies to any clustering algorithm and thus to naive Algorithm \ref{alg:split_conformal_clustering} as well. This is because any clustering algorithm can be represented as stochastic clustering via one-hot encoding. Any clustering algorithm that deterministically labels $x_1, \ldots, x_n$ with $y_1, \ldots, y_n \in [K]$ can be represented as $y_i \sim \mathrm{Cat}(\hat{\gamma}_n(x_i))$ with $\hat{\gamma}_n(x_i)$ whose $k$-th entry is $1$ if $k = y_i$ and $0$ otherwise. However, clustering algorithms based on such one-hot encoding tend to have larger $\mathsf{S}_n$ (less stable) unless the underlying cluster structure has extreme separation because $\gamma$ comprises one-hot encoding vectors. Similarly, the estimation error $\mathsf{E}_n$ can also be large when the underlying cluster structure is not well separated, namely, $\gamma^\ast(x)$ stays away from the vertices of the simplex $\Delta_K$ for most $x$'s, which is common in practice. As already noted in the previous sections, without stochastic clustering, Algorithm \ref{alg:split_conformal_clustering} generally fails in terms of both consistency and stability, which leads to severe under-coverage.
\end{remark}

\subsection{Asymptotic Coverage}
This section establishes asymptotic coverage results, where the right-hand side of \eqref{eq:coverage_bound_split} converges to $1 - \alpha$ as $n \to \infty$. First, the following theorem characterizes sufficient conditions under which the stochastic clustering $\gamma$ is consistent and asymptotically replace-one invariant. 

\begin{theorem}
    \label{thm:asymptotic_coverage}
    Under the setting of Theorem \ref{thm:coverage_bound}, suppose $\gamma$ is invariant to permutations of the input data, and the following conditions hold:
    \begin{align}
        \hat{\gamma}_n(X_1) - \gamma^\ast(X_1) & \xrightarrow{p} 0, \label{eq:marginal_consistency} \\
        H^2(\hat{\gamma}_n(X_2), \hat{\gamma}_{1 \to n + 1}(X_2)) & = o_p(n^{-1}), \label{eq:marginal_stability}
    \end{align}
    where $H^2(u, v) = \frac{1}{2} \sum_{k = 1}^{K} (\sqrt{u_k} - \sqrt{v_k})^2$ denotes the square of the Hellinger distance between any $u, v \in \Delta_K$. Then, the stochastic clustering $\gamma$ is consistent and asymptotically replace-one invariant, and we have $\P\left(\hat{\sigma}_o^\ast(Y_{n + 1}^\ast) \in \hat{\cC}(X_{n + 1})\right) \ge 1 - \alpha - o(1)$, where $o(1) \to 0$ as $n \to \infty$.
\end{theorem}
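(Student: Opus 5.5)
The plan is to verify separately that $\mathsf{E}_n(\gamma) \to 0$ and $\mathsf{S}_n(\gamma) \to 0$, and then invoke Theorem~\ref{thm:coverage_bound}. Since $\frac{n}{n+2} \le 1$ and $\frac{n}{2(n+2)} \le \frac12$, the right-hand side of \eqref{eq:coverage_bound_split} is at least $1 - \alpha - \mathsf{E}_{n/2}(\gamma) - \frac12 \mathsf{S}_{n/2}(\gamma)$. This gives the claimed $1 - \alpha - o(1)$ once both quantities vanish along $n/2 \to \infty$.

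\textbf{Consistency.} Because $\gamma$ is invariant to permutations of the input data and the $X_i$ are i.i.d., the pairs $(\hat{\gamma}_n(X_i), X_i)$ have the same distribution for every $i$. Hence $\mathsf{E}_n(\gamma) = \E\|\hat{\gamma}_n(X_1) - \gamma^\ast(X_1)\|_1$. The integrand converges to zero in probability by \eqref{eq:marginal_consistency}, and it is bounded by $2$ because both vectors lie in $\Delta_K$. Bounded convergence then gives $\mathsf{E}_n(\gamma) \to 0$.

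\textbf{Stability.} By exchangeability and permutation invariance, every summand of $\mathsf{S}_n(\gamma)$ has the same distribution as the $i = 1$ term. So it suffices to bound
\[
\E\Big\|\motimes_{j=2}^{n} \hat{\gamma}_n(X_j) - \motimes_{j=2}^{n} \hat{\gamma}_{1\to n+1}(X_j)\Big\|_1 .
\]
I will control this through the Hellinger distance, since it tensorizes well. First, the $\ell_1$ distance (twice the total variation distance) is at most $2\sqrt{2}\,H$. Second, for product measures,
\[
1 - H^2\Big(\motimes_j u_j, \motimes_j v_j\Big) = \prod_j \big(1 - H^2(u_j, v_j)\big),
\]
which implies $H^2(\otimes u_j, \otimes v_j) \le \sum_j H^2(u_j, v_j)$. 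Writing $D_n := \sum_{j=2}^{n} H^2(\hat{\gamma}_n(X_j), \hat{\gamma}_{1\to n+1}(X_j))$, the summand is therefore at most $\E\min\{2,\, 2\sqrt{2}\sqrt{D_n}\}$. It remains to show $D_n \xrightarrow{p} 0$.

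\textbf{Main obstacle.} The difficulty is passing from the single-point condition \eqref{eq:marginal_stability} to the sum $D_n$ of $n-1$ terms. These terms are identically distributed by permutation invariance, applied to the indices $2, \ldots, n$ (which both fits treat symmetrically), but they are dependent. Moreover, $o_p(n^{-1})$ for one term does not by itself make $\E[nH^2] \to 0$. My plan is to truncate, using $T_j := \min\{nH^2_j, 1\}$ where $H^2_j$ denotes the $j$-th term of $D_n$.
\begin{itemize}
\item By \eqref{eq:marginal_stability}, $nH^2_2 \xrightarrow{p} 0$. Since $T_2$ is bounded, $\E T_2 \to 0$.
\item By exchangeability, $\E\big[\frac1n \sum_j T_j\big] \le \E T_2 \to 0$, so $\frac1n\sum_j T_j \to 0$ in $L^1$ and hence in probability.
\item On the event $\{\max_j nH^2_j \le 1\}$ we have $D_n = \frac1n \sum_j T_j$. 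Off this event, $D_n$ may be large.
\end{itemize}
To handle the bad event, I will instead bound the summand directly by
\[
\E \min\Big\{2,\; 2\sqrt{2}\Big(\tfrac1n\textstyle\sum_j T_j\Big)^{1/2}\Big\} \;+\; 2\,\P\big(\exists j : nH^2_j > 1\big).
\]
The second term is at most $2\,\E\big[\frac1n\sum_j 1\{nH^2_j>1\}\big] \cdot n$, and this union bound is too crude. I expect to need a finer truncation instead. Split each term as $H^2_j \le \frac{T_j}{n} + H^2_j\, 1\{nH^2_j > 1\}$. Then use the product formula directly: replace the pairs with $nH^2_j > 1$ by the trivial bound $1 - H^2 \ge 0$, while the remaining pairs contribute $\prod_j (1 - T_j/n) \ge 1 - \frac1n \sum_j T_j$.

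This shows $H^2$ of the product is small whenever no coordinate is bad. The fraction of bad coordinates has expectation $\P(nH^2_2 > 1) \to 0$, but even one bad coordinate can spoil the bound. Closing this gap is the delicate step. I would try reading \eqref{eq:marginal_stability} as holding uniformly, or in expectation, via the bounded-convergence argument above. If needed, I would note that the intended meaning of \eqref{eq:marginal_stability} yields $\E[nH^2_2 \wedge 1] \to 0$, which suffices when combined with $\ell_1 \le 2$ on the bad set.
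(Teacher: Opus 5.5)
Your consistency argument is correct and follows the paper. So are the reduction of $\mathsf{S}_n$ to its $i=1$ summand by permutation invariance and the bound of that summand by $\E\min\{2,2\sqrt{2}\sqrt{D_n}\}$. Your use of $1-\prod_j(1-a_j)\le\sum_j a_j$ there neatly replaces the paper's log-sum lemma.

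The gap is the last step, $D_n\xrightarrow{p}0$, and your proposed closure fails. The condition $\E[nH_2^2\wedge 1]\to0$ is merely equivalent to \eqref{eq:marginal_stability}. It controls $\P(nH_j^2>1)$ for each fixed $j$, but not the probability that \emph{some} of the $n-1$ coordinates is bad. ``$\ell_1\le 2$ on the bad set'' helps only if that probability vanishes.

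Your suspicion at this step is well founded. The paper closes it with the one-line assertion $\sum_{j=2}^{n}H_j^2=(n-1)\,o_p(n^{-1})\xrightarrow{p}0$ ``by symmetry,'' which is not a valid inference for dependent, identically distributed summands. In fact the step cannot be closed from \eqref{eq:marginal_stability} as stated.

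Here is a counterexample. Take $p=1$ and let $X$ have a bounded density $f$, continuous and positive at its mean $\mu$, with finite variance. Take $\gamma^\ast$ continuous and a fixed $u\in\Delta_K$ with $u\neq\gamma^\ast(\mu)$. Use the permutation-invariant rule $\hat{\gamma}_n(x)=u$ if $|x-\bar X_n|<n^{-1/2}$, and $\hat{\gamma}_n(x)=\gamma^\ast(x)$ otherwise. Then:
\begin{itemize}
\item \eqref{eq:marginal_consistency} holds, since $\P(|X_1-\bar X_n|<n^{-1/2})=O(n^{-1/2})$.
\item Replacing $X_1$ by $X_{n+1}$ shifts the window by $(X_{n+1}-X_1)/n$. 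So $H_2^2\neq0$ only on an event of probability $O(1/n)$, and \eqref{eq:marginal_stability} holds.
\item Yet, conditionally on $(X_1,X_{n+1})$, the number of $j\in\{2,\dots,n\}$ with $X_j$ in the symmetric difference of the two windows is approximately Poisson with mean $2f(\mu)|X_{n+1}-X_1|$.
\item So with probability bounded away from zero, some coordinate flips between $u$ and $\gamma^\ast(X_j)\approx\gamma^\ast(\mu)$. Marginalizing cannot increase $\ell_1$ distance, so the $i=1$ summand is then bounded below, and $\mathsf{S}_n\not\to0$.
\end{itemize}
Hence asymptotic replace-one invariance, and with it the route through Theorem~\ref{thm:coverage_bound}, does not follow from the stated hypotheses.

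The missing ingredient is an integrated form of the condition, which is what your ``in expectation'' reading should mean. Suppose $n\,\E[H^2(\hat{\gamma}_n(X_2),\hat{\gamma}_{1\to n+1}(X_2))]\to0$, equivalently \eqref{eq:marginal_stability} plus uniform integrability of $nH_2^2$. Linearity of expectation, which needs no independence, gives $\E D_n=(n-1)\E H_2^2\to0$. Your own bound then finishes directly by Jensen: $\mathsf{S}_n\le2\sqrt{2}\,\E\sqrt{D_n}\le2\sqrt{2}\sqrt{\E D_n}\to0$. In the example above, $n\,\E H_2^2$ stays bounded away from zero, as it must.
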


To summarize, Theorem \ref{thm:asymptotic_coverage} states that \eqref{eq:marginal_consistency} and \eqref{eq:marginal_stability} imply consistency and asymptotic replace-one invariance, respectively. Since $\mathsf{E}_n(\gamma)$ is the average estimation error over all input points, one can deduce that \eqref{eq:marginal_consistency}, the convergence for the first input $X_1$, will imply the convergence of the average $\mathsf{E}_n(\gamma)$. Meanwhile, it turns out that $\mathsf{S}_n(\gamma)$, the stability of the joint distribution of the labels, converges when \eqref{eq:marginal_stability}, the stability bound for a single input $X_2$, enjoys a sufficiently fast rate of $o_p(n^{-1})$.

Next, we show that these conditions are met by an important family of clustering algorithms: parametric mixture models, which are widely used in practice.

\begin{theorem}
    \label{thm:parametric_mixture_coverage}
    Let $\{P^\theta : \theta \in \Theta\}$ be a family of parametric mixture models for the joint distribution of $(X, Y^\ast)$ such that for $(X, Y^\ast) \sim P^\theta$, the conditional probability vector of $Y^\ast$ given $X$ can be represented as $(\P(Y^\ast = 1 \, | \, X = x), \ldots, \P(Y^\ast = K \, | \, X = x)) =: h(x, \theta)$, where $\theta \mapsto h(x, \theta)$ is a smooth function. Suppose $P^\ast = P^{\theta^\ast}$ for some $\theta^\ast$, and let $\hat{\theta}_n$ be a consistent root of the likelihood function given $X_1, \ldots, X_n \overset{i.i.d.}{\sim} P_X^\ast$ such that the standard asymptotic linearity holds, namely, 
    \begin{equation}
        \label{eq:asymptotic_linearity}
        \sqrt{n}(\hat{\theta}_n - \theta^\ast) = \frac{1}{\sqrt{n}} \sum_{i = 1}^{n} \psi(X_i) + O_p(n^{-1 / 2})
    \end{equation}
    for some $\psi \colon \R^p \to \Theta$ with $\E_{X \sim P_X^\ast} \psi(X) = 0$ and $\E_{X \sim P_X^\ast} \|\psi(X)\|_2^2 < \infty$. Suppose we implement the stochastic clustering $\gamma$ using the obtained estimator $\hat{\theta}_n$ by defining $\hat{\gamma}_n(x) = h(x, \hat{\theta}_n)$ for $x \in \R^p$, namely, the correctly specified parametric mixture model is used for clustering. Then, $\gamma$ is consistent and asymptotically replace-one invariant; thus we have $\P\left(\hat{\sigma}_o^\ast(Y_{n + 1}^\ast) \in \hat{\cC}(X_{n + 1})\right) \ge 1 - \alpha - o(1)$, where $o(1) \to 0$ as $n \to \infty$.
\end{theorem}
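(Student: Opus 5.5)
The plan is to reduce Theorem~\ref{thm:parametric_mixture_coverage} to Theorem~\ref{thm:asymptotic_coverage}. First, $\hat{\gamma}_n(x) = h(x, \hat{\theta}_n)$ is invariant to permutations of the input data, because the likelihood, and hence its consistent root $\hat{\theta}_n$, is symmetric in $X_1, \ldots, X_n$. If several roots exist, we assume a symmetric selection rule. It then suffices to verify \eqref{eq:marginal_consistency} and \eqref{eq:marginal_stability}. The coverage conclusion follows directly from Theorem~\ref{thm:asymptotic_coverage}, which in turn rests on Theorem~\ref{thm:coverage_bound}.

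For \eqref{eq:marginal_consistency}, we write $\hat{\gamma}_n(X_1) - \gamma^\ast(X_1) = h(X_1, \hat{\theta}_n) - h(X_1, \theta^\ast)$. Since $\hat{\theta}_n \xrightarrow{p} \theta^\ast$ and $\theta \mapsto h(x, \theta)$ is continuous, we use a tightness argument. Fix $\epsilon > 0$ and pick a compact set, or simply use that $X_1$ is a single random point. Then $\sup_{\|\theta - \theta^\ast\| \le \delta} \|h(X_1, \theta) - h(X_1, \theta^\ast)\|_1 \to 0$ almost surely as $\delta \to 0$, by continuity at $\theta^\ast$ for each fixed $X_1$. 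Combining this with $\P(\|\hat{\theta}_n - \theta^\ast\| > \delta) \to 0$ gives convergence in probability. The dependence between $X_1$ and $\hat{\theta}_n$ causes no trouble, because the argument controls $\hat\theta_n$ through a uniform bound over the $\delta$-ball.

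For \eqref{eq:marginal_stability}, the key is that replacing one point moves the estimator by $O_p(n^{-1})$. Apply \eqref{eq:asymptotic_linearity} to both $\hat{\theta}_n$ and $\hat{\theta}_{1 \to n+1}$, the latter being fitted on an i.i.d.\ sample with the same law. Subtracting the two expansions gives
\begin{equation*}
    \hat{\theta}_n - \hat{\theta}_{1 \to n + 1} = \frac{1}{n}\big(\psi(X_1) - \psi(X_{n + 1})\big) + O_p(n^{-1}) = O_p(n^{-1}),
\end{equation*}
using that $\psi(X_1)$ and $\psi(X_{n+1})$ are $O_p(1)$ by finite second moments.

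Next, we bound the Hellinger distance by a squared parameter difference. We use $H^2(u, v) \le \frac12\|u - v\|_1$, or better $H^2(u, v) \le \sum_k (u_k - v_k)^2/(2(\sqrt{u_k} + \sqrt{v_k})^2)$, together with smoothness of $h$ in $\theta$: $\|h(X_2, \theta) - h(X_2, \theta')\| \le L(X_2)\|\theta - \theta'\|$ locally near $\theta^\ast$, with $L(X_2) = O_p(1)$. This gives $H^2 = O_p(n^{-2}) = o_p(n^{-1})$ as long as we can avoid the bound $H^2 \lesssim \|u - v\|_1$, which would only yield $O_p(n^{-1})$.

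The main obstacle is therefore the Hellinger step when some coordinate $h(X_2, \theta^\ast)_k$ is near zero, where $\sqrt{\cdot}$ is not Lipschitz. I would try the elementary inequality $(\sqrt{a} - \sqrt{b})^2 \le |a - b|^2 / \max(a, b)$ and restrict to the event that $h(X_2, \theta^\ast)_k > 0$ for all $k$, which holds for mixtures with positive densities. Each coordinate then contributes $O_p(n^{-2})/h(X_2, \theta^\ast)_k = O_p(n^{-2})$, since $X_2$ is a single fixed-law point. If some coordinates vanish identically, those terms are zero. As a fallback, the crude bound $(\sqrt a - \sqrt b)^2 \le |a - b|$ applied only to small coordinates still gives $O_p(n^{-1})$ on a vanishing-probability set, which is enough.
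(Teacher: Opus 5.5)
Your proposal is correct. It shares the paper's skeleton: reduce to Theorem~\ref{thm:asymptotic_coverage}, and obtain $\hat{\theta}_n - \hat{\theta}_{1 \to n+1} = O_p(n^{-1})$ by subtracting the two asymptotically linear expansions. Where it differs is in turning this parameter bound into a Hellinger bound, and your route is more elementary.

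\textbf{The paper's route.} It proves a second-order expansion $H^2(h(x, \theta_0), h(x, \theta)) = \frac{1}{8}\langle \theta - \theta_0, I_x(\theta_0)(\theta - \theta_0)\rangle + o(\|\theta - \theta_0\|^2)$ from score and Fisher-information identities. It then centers this expansion at the random point $\hat{\theta}_n$ and shows $I_{X_2}(\hat{\theta}_n) = O_p(1)$.

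\textbf{Your route.} You use the identity $(\sqrt{a} - \sqrt{b})^2 = (a - b)^2/(\sqrt{a} + \sqrt{b})^2 \le (a - b)^2/\max(a, b)$. You combine it with a first-order Lipschitz constant $L_\delta(X_2) = \sup_{\|\theta - \theta^\ast\| \le \delta} \|\partial_\theta h(X_2, \theta)\|$ over a fixed ball around $\theta^\ast$. This constant is a single almost surely finite random variable, hence $O_p(1)$.

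\textbf{What each buys.} Your argument needs only $C^1$ dependence on $\theta$. Because your ball does not move with $n$, it also avoids a uniformity issue the paper leaves implicit: the paper applies Lemma 2.12 of van der Vaart to a Taylor remainder that depends on the random center $\hat{\theta}_n$ and on $X_2$. The paper's route, in exchange, identifies the exact leading term as a Fisher-information quadratic form. That is more informative but not needed for $o_p(n^{-1})$.

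\textbf{Other differences.} Your $\delta$-ball argument for \eqref{eq:marginal_consistency} uses only continuity of $h(x, \cdot)$ for each fixed $x$. The paper's continuous-mapping step on $(X_1, \hat{\theta}_n)$ tacitly uses joint continuity in $(x, \theta)$. You also explicitly justify the permutation invariance that Theorem~\ref{thm:asymptotic_coverage} requires, which the paper takes for granted.

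Two small remarks, writing $u = h(X_2, \hat{\theta}_n)$ and $v = h(X_2, \hat{\theta}_{1 \to n+1})$.
\begin{itemize}
\item The denominator you must bound is $\max(u_k, v_k) \ge u_k$, not $h(X_2, \theta^\ast)_k$. Your consistency argument applied at $X_2$ gives $u_k \ge h(X_2, \theta^\ast)_k/2$ with probability tending to one, so $1/\max(u_k, v_k) = O_p(1)$ as needed.
\item Your closing fallback is unnecessary once the posterior weights are strictly positive. The paper assumes this implicitly too, since $I_x(\theta)$ involves $1/[h(x, \theta)]_k$. As phrased, the fallback would also not cover a genuinely degenerate coordinate, because the small-coordinate set then need not have vanishing probability. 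Such a coordinate is better handled by noting that $\partial_\theta h(x, \cdot)_k$ vanishes at an interior zero, so $(\sqrt{u_k} - \sqrt{v_k})^2 \le |u_k - v_k| = o_p(1) \cdot O_p(n^{-1})$.
\end{itemize}
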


The key to Theorem \ref{thm:parametric_mixture_coverage} is that if the model parameter $\theta^\ast$ can be estimated consistently at the usual parametric rate,\footnote{It is common in the literature to have $o_p(1)$ instead of $O_p(n^{-1 / 2})$ on the right-hand side of \eqref{eq:asymptotic_linearity} because it is the minimum requirement needed to apply Slutsky's theorem and establish that $\sqrt{n}(\hat{\theta}_n - \theta^\ast)$ is asymptotically normal. However, in most cases under standard regularity conditions (such as the smoothness of $h$ in our case), the stronger $O_p(n^{-1 / 2})$ bound holds.} then $h(x, \hat{\theta}_n)$ will consistently approximate the true response probabilities. For smooth parametric mixture models, despite the existence of certain undesirable solutions to the likelihood equations, the classical result on the sequence of consistent roots applies under certain conditions, as shown in \citet{redner1984mixture}; see also \citet{wang2016maximum} for t mixture models. To find such roots, the EM algorithm is frequently employed, and is formally shown to converge, for instance, in Theorem 4.3 of \citet{redner1984mixture} under good initialization; see \citet{balakrishnan2017statistical} for a detailed and refined analysis.

In summary, Theorem \ref{thm:asymptotic_coverage} provides key conditions for consistency and asymptotic replace-one invariance, enabling the asymptotic coverage guarantee, while Theorem \ref{thm:parametric_mixture_coverage} exemplifies situations where these conditions hold by carefully deriving them from the existing results on the consistency of parametric mixture models. We conjecture that the conditions for asymptotic coverage in Theorem~\ref{thm:asymptotic_coverage} may hold more broadly for many other clustering methods beyond parametric mixture models. Currently, however, there is a lack of theory on the consistency of soft-label clustering methods beyond mixture models that can be exploited within our framework; this leaves ample room for future investigations. One particularly promising direction may be to consider nonparametric clustering algorithms that extend parametric mixture models through suitable transformations, such as mixture copula \citep{tewari2023estimation} or normalizing flows \citep{izmailov2020semi}. Verifying these conditions for wider classes of clustering algorithms is out of the scope of this paper, but we leave it as an important direction for future work.

\section{Empirical Studies}
\label{sec:simulations}
We empirically investigate our approach and validate our theoretical results through simulated examples in Section 4.1. We conclude with an application to cell type discovery from single-cell sequencing data in Section 4.2. Further simulations for additional clustering approaches and another application to astronomy data are given in the Supplementary Material. Data and \texttt{Python} code to reproduce all these simulations are provided at \url{https://github.com/DataSlingers/ConformalClustering}.

\subsection{Simulations on Parametric Mixture Models}
This section demonstrates the performance of our method (Algorithm \ref{alg:split_conformal_clustering_stochastic}) as well as validates our theoretical results on parametric mixture models, a Gaussian Mixture Model (GMM) and a Gamma Mixture Model (GaMM). We consider a low-dimensional setting ($p = 2$) with $K = 3$ components and a high-dimensional setting ($p = 50$ for GMM and $p = 30$ for GaMM) with $K = 5$ components, where the centers of the components are fixed and $\sigma^2 I_p$ is the common covariance matrix of the components.\footnote{Note that as clustering is more challenging in the higher-dimensional setting, we consider an adjusted range of $\sigma^2$ and configuration of the component centers to ensure that the clustering problem is not too easy or too difficult; details are given in the {\tt Python} code accompanying this paper.} We compare the performance of our method and the naive approaches for varying $\sigma^2$ with $n = 1000$ ($n = 5000$ for GMM and $n = 9000$ for GaMM in the higher-dimension) and for fixed $\sigma^2$ with varying $n$ to validate our asymptotic results in Section \ref{sec:theory}. Throughout the simulation, we always split the data into two halves (Step 1), use the support vector classifier with a radial basis function kernel for the classifier (Step 3), and use the generalized inverse quantile score defined in Section \ref{sec:background_conformal_prediction} for the conformity score (Step 4) in Algorithm \ref{alg:split_conformal_clustering_stochastic}, with $\alpha = 0.1$. While our theory currently only covers parametric mixture models, we conjecture that correct asymptotic coverage will hold more broadly based on Theorem~\ref{thm:asymptotic_coverage}. Hence, we provide empirical evidence for this using a stochastic Fuzzy-C-Means method in the Supplementary Material.

\begin{figure}[!tp]
    \centering
    \includegraphics[width=0.5\textwidth]{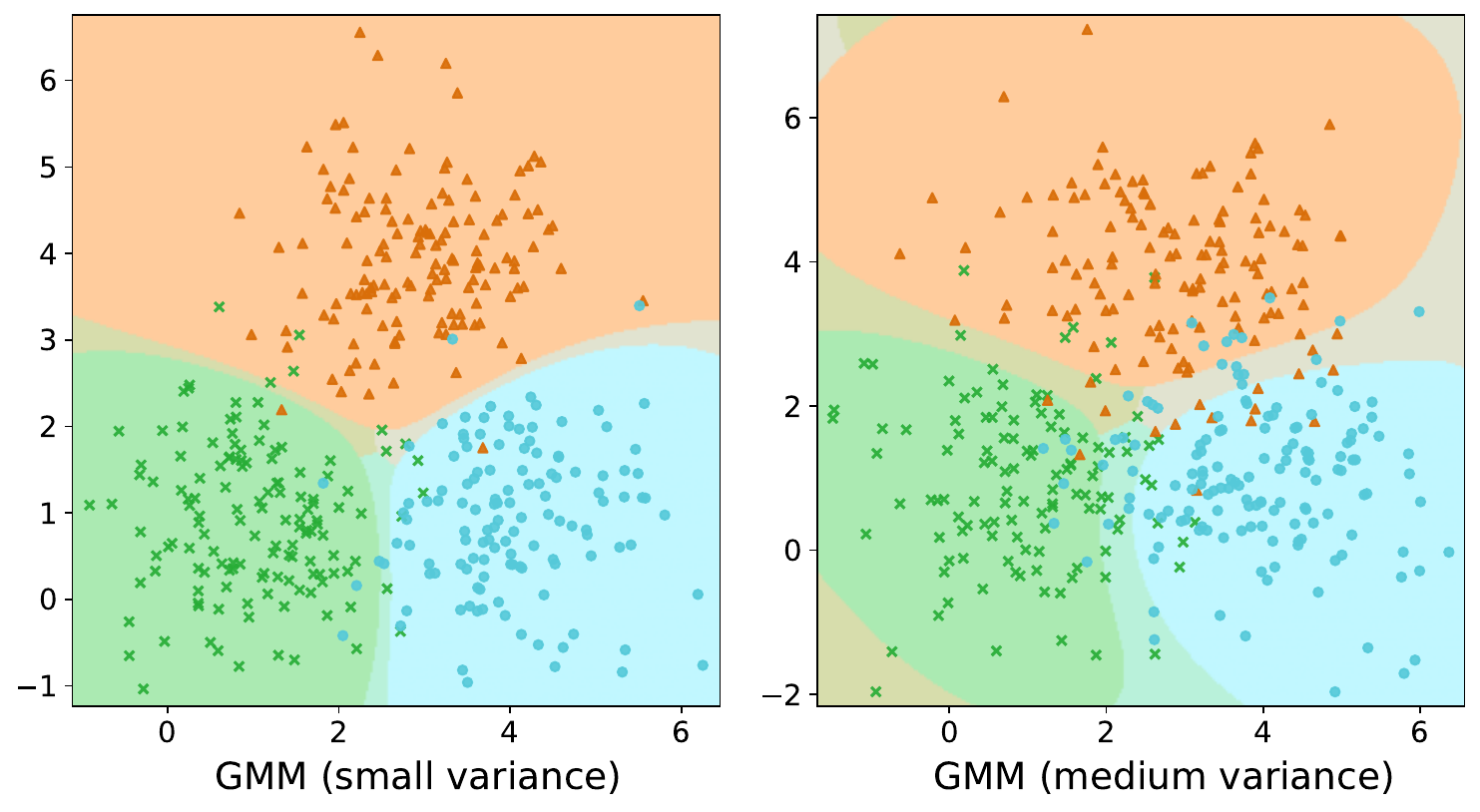}
    \hspace*{-7pt}
    \includegraphics[width=0.5\textwidth]{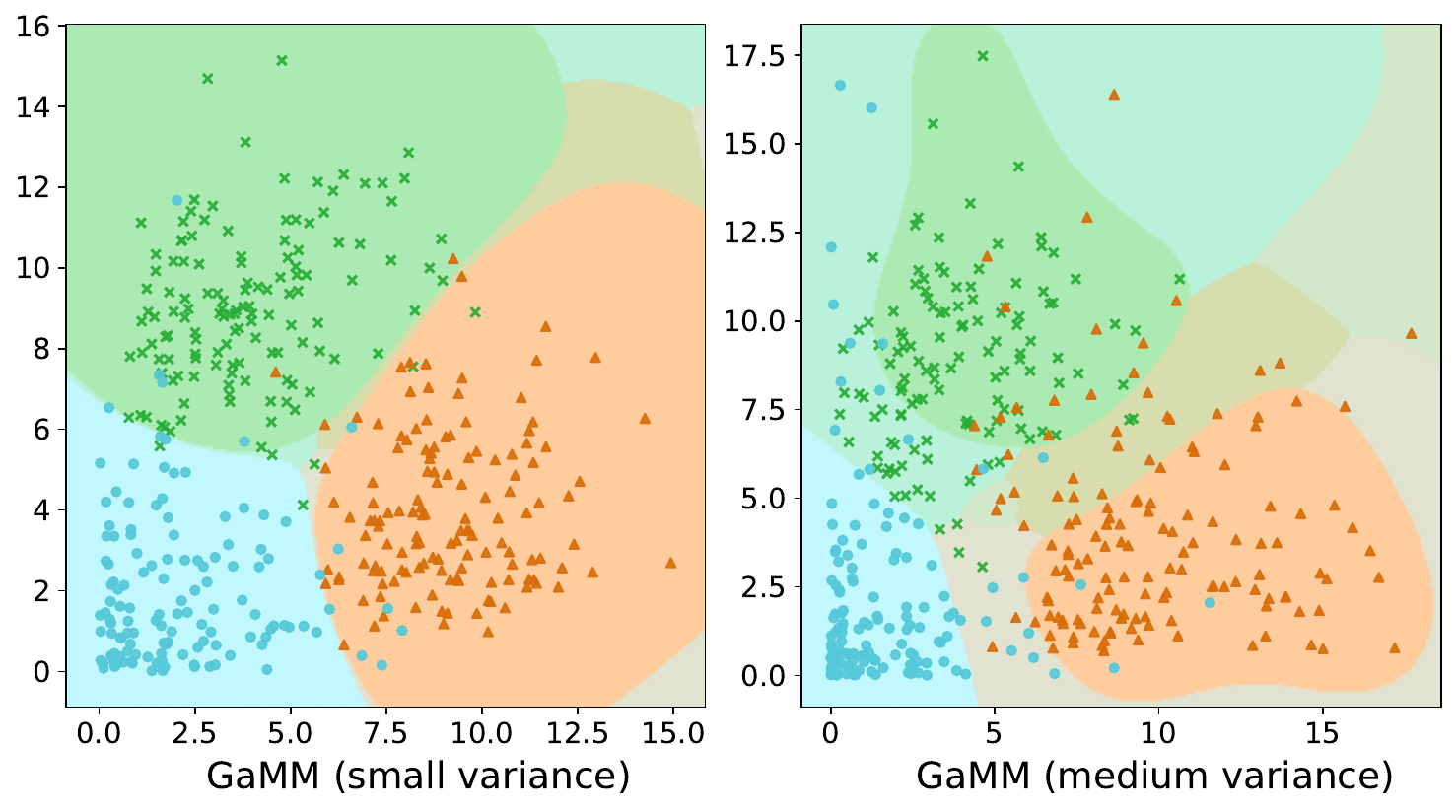}
    \caption{Confidence set heatmap visualization for GMM and GaMM simulations with three components in $\R^2$, with the sample ($n = 400$) colored with the true cluster labels. $\hat{\cC}(x)$ from Algorithm \ref{alg:split_conformal_clustering_stochastic}---with the corresponding stochastic mixture clustering---is visualized for a grid of covariate values $x$ in $\R^2$ by mixing the colors of the cluster labels in $\hat{\cC}(x)$. For both GMMs and GaMMs, $\sigma^2 I_2$ is the common covariance matrix of the three components, and we vary $\sigma^2$ to represent different levels of difficulty for the clustering problem.}
    \label{fig:heatmaps}
\end{figure}

\begin{figure}[!htb]
    \centering
    \includegraphics[width=\textwidth]{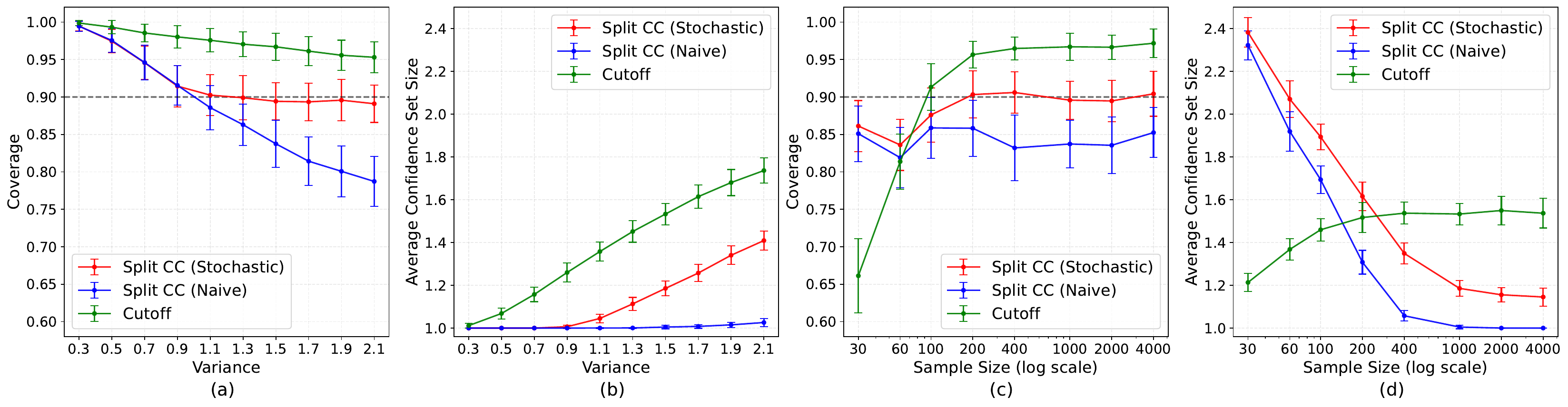}
    \includegraphics[width=\textwidth]{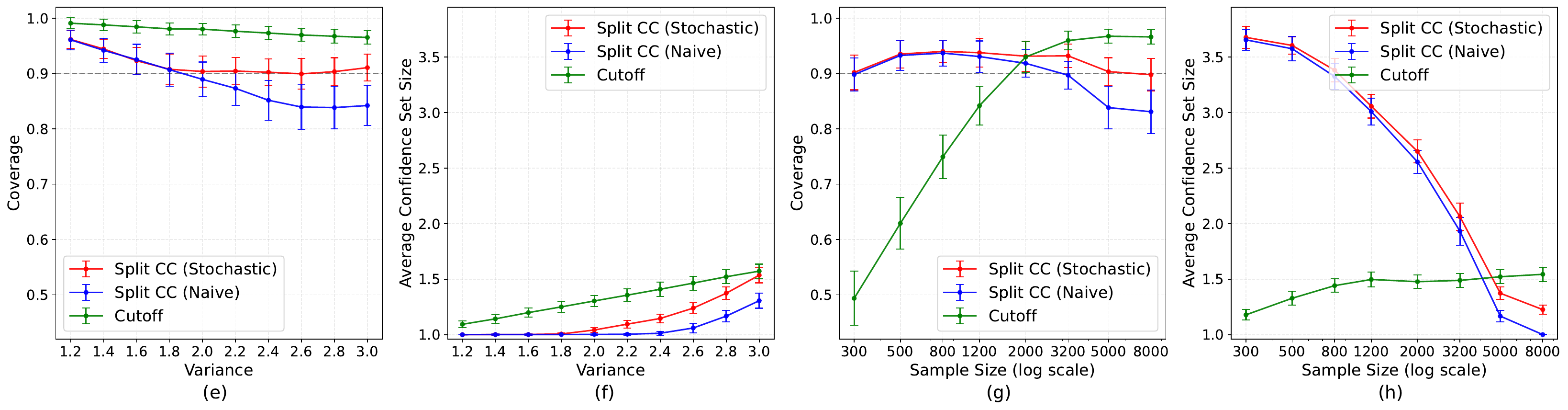}
    \caption{Coverage and confidence set size for GMM simulations with $p = 2$ and $K = 3$ (top row) and with $p = 50$ and $K = 5$ (bottom row). (a) and (b) show the results for varying $\sigma^2$ for fixed sample size $n = 1000$, while (c) and (d) are for fixed variance $\sigma^2 = 1.5$ and increasing sample size $n$. (e) and (f) are the results for varying $\sigma^2$ for fixed sample size $n = 5000$, while (g) and (h) are for fixed variance $\sigma^2 = 2.8$ and increasing sample size $n$. Our approach (Algorithm \ref{alg:split_conformal_clustering_stochastic}) is labeled as Split CC (Stochastic), while Algorithm \ref{alg:split_conformal_clustering} with hard clustering and Algorithm \ref{alg:cutoff} with thresholding soft labels are labeled as Split CC (Naive) and Cutoff, respectively.}
    \label{fig:GMM}
\end{figure}

To begin with, we first visualize the confidence sets produced by Algorithm \ref{alg:split_conformal_clustering_stochastic} in the two-dimensional setting for two different values of $\sigma^2$. The first two panels of Figure \ref{fig:heatmaps} show the points simulated from the above GMM, overlaid with the heatmap representing the confidence sets produced by Algorithm \ref{alg:split_conformal_clustering_stochastic} implemented with stochastic GMM clustering. The confidence set $\hat{\cC}(x) \subset \{1, 2, 3\}$ is visualized for a grid of covariate values $x \in \R^2$ by first assigning three colors (orange, blue, and green) to the cluster labels and coloring $x$ by a mixture of the colors of the labels in $\hat{\cC}(x)$. When $\sigma^2$ is small, the three components are well-separated, so the confidence sets are mostly singletons, with small regions corresponding to more than one cluster label around the cluster boundaries. For the larger $\sigma^2$, we observe wider regions with $|\hat{\cC}(x)| > 1$, reflecting the increased uncertainty in cluster label estimation. The last two panels of Figure \ref{fig:heatmaps} show the GaMM simulation results, which are qualitatively similar to the GMM simulation, with distinctions in the shapes of the clusters due to heavier tails and one-sided support of the gamma distribution. Overall, Figure \ref{fig:heatmaps} illustrates that successful employment of Algorithm \ref{alg:split_conformal_clustering_stochastic} can yield informative confidence sets that reflect the uncertainty in cluster label estimation.

We now evaluate the coverage and confidence set size of Algorithm \ref{alg:split_conformal_clustering_stochastic} on the GMM simulation. The top row of Figure \ref{fig:GMM} corresponds to the two-dimensional setting: (a) and (b) plot the coverage and set size for varying $\sigma^2$ and fixed sample size $n = 1000$, while (c) and (d) are for fixed variance $\sigma^2 = 1.5$ and increasing sample size $n$. From (a), we can see that our method achieves the desired $90\%$ coverage at the considered variance levels, while the naive split conformal clustering (Algorithm \ref{alg:split_conformal_clustering} with hard GMM clustering) experiences under-coverage. (b) shows that the confidence sets produced by the naive method are mostly singletons, failing to reflect the uncertainty in cluster label estimation. From (c) and (d), we can see that our method achieves the desired coverage for sufficiently large $n$, thus validating our asymptotic theory, and the confidence set size is informative as it gets smaller with increasing $n$. Crucially, observe that Algorithm \ref{alg:split_conformal_clustering} with hard clustering fails to achieve the desired coverage even as $n$ increases, which is because the hard clustering method is not stable and has large estimation error, thereby leading to under-coverage, as noted in Remark \ref{rmk:naive_split_conformal_clustering_undercoverage}. Meanwhile, we notice that Algorithm \ref{alg:cutoff} (Cutoff) is overly conservative, leading to unnecessarily large and thus uninformative confidence sets, confirming our discussion in Section \ref{sec:method}: despite the strong signal in the largest soft label, the cutoff method often has to include other labels (more than necessary) to exceed the $90\%$ threshold. The bottom row of Figure \ref{fig:GMM} shows the GMM results for the higher-dimensional setting. The results are qualitatively similar, with our method achieving the desired coverage and informative confidence set size, while the naive methods fail to achieve the desired coverage and/or produce uninformative confidence sets. Next, Figure \ref{fig:GaMM} shows the results for the GaMM simulation. Overall, the results reveal similar findings to the GMM simulation, once again validating our theoretical results and demonstrating the performance of our method in a different parametric mixture model setting.

\begin{figure}[!tb]
    \centering
    \includegraphics[width=\textwidth]{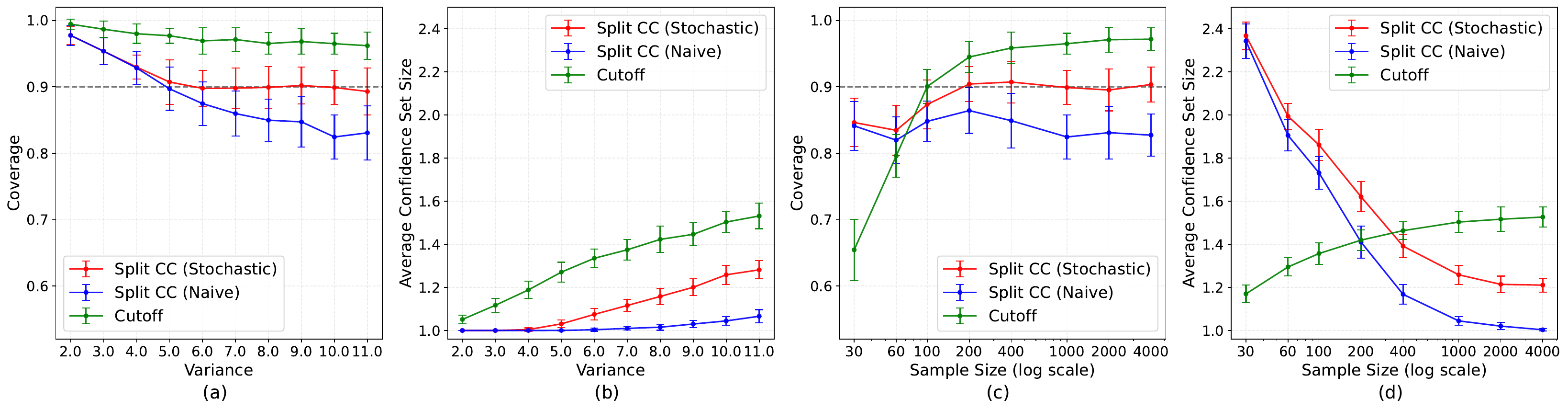}
    \includegraphics[width=\textwidth]{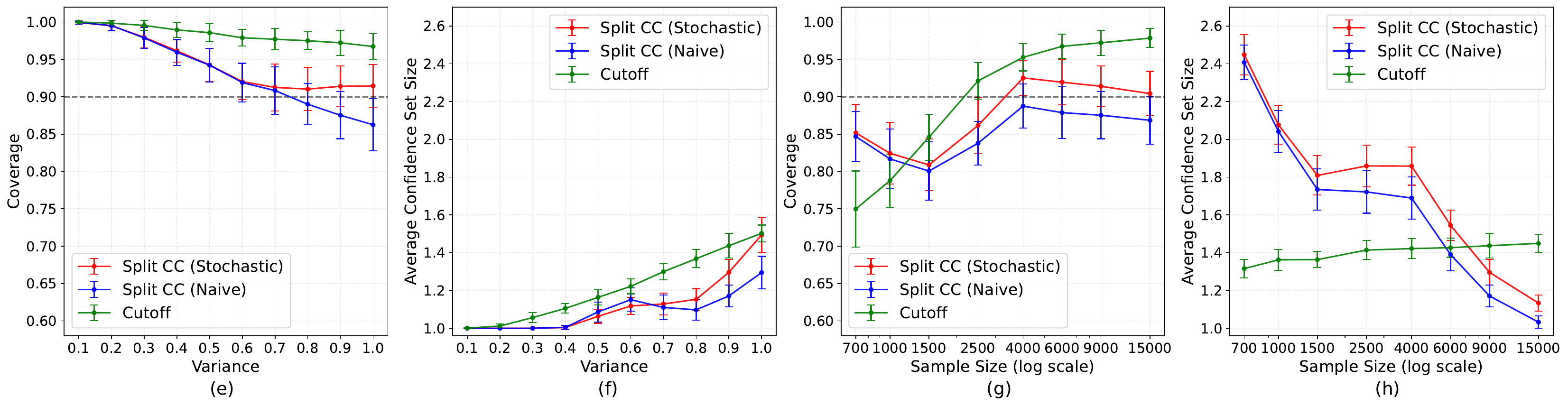}
    \caption{Coverage and confidence set size for GaMM simulations with $p = 2$ and $K = 3$ (top row) and with $p = 30$ and $K = 5$ (bottom row). (a) and (b) show the results for varying $\sigma^2$ for fixed sample size $n = 1000$, while (c) and (d) are for fixed variance $\sigma^2 = 10$ and increasing sample size $n$. (e) and (f) are the results for varying $\sigma^2$ for fixed sample size $n = 9000$, while (g) and (h) are for fixed variance $\sigma^2 = 0.9$ and increasing sample size $n$. The labeling of the methods is the same as in Figure \ref{fig:GMM}.}
    \label{fig:GaMM}
\end{figure}

In summary, the simulation results in this section demonstrate that our method, Algorithm \ref{alg:split_conformal_clustering_stochastic}, can successfully achieve the desired coverage and produce informative confidence sets. These validate our theoretical results in Section \ref{sec:theory}, while confirming our previous discussions on why the naive approaches fail to reflect the uncertainty in an informative manner.

\subsection{Application to Single-Cell Genomics}
\label{sec:applications}
We apply our Conformal Clustering approach to quantify the uncertainty in cell types discovered by clustering single-cell RNA sequencing data. Specifically, we analyze the PBMC-3K single-cell RNA-sequencing dataset,\footnote{Available at \url{https://www.10xgenomics.com/datasets/3-k-pbm-cs-from-a-healthy-donor-1-standard-1-1-0}.} a standard benchmark comprising approximately 2,700 peripheral blood mononuclear cells from a healthy donor. Following the standard pre-processing workflow of {\tt Scanpy} \citep{wolf2018scanpy} and {\tt Seurat} \citep{hao2024dictionary}, widely used software tools for single-cell data, we select the top 2,000 highly variable genes, normalize and log-transform expression counts, and obtain the reported cell-type labels through standard marker-gene matching available in these software packages. This pre-processing pipeline identifies $8$ cell-types, but we merge two of them, in accordance with the hierarchy of blood cell development, because one was only represented by a very small number of cells; this leaves us with $K = 7$ cell types. Note that the cell type labels are used solely as reference annotations for validation and visualization of our results.

\begin{figure}[!ht]
    \centering
    \subfloat{\includegraphics[width=\linewidth]{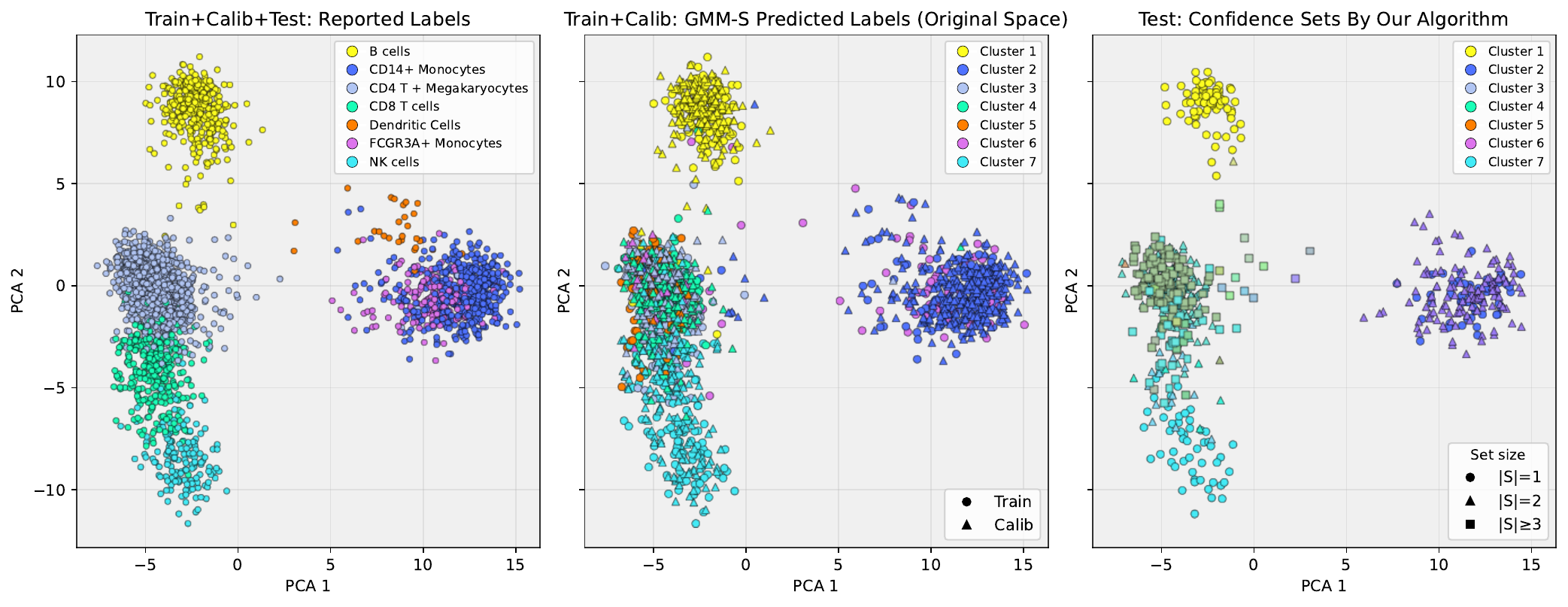}} \\
    \subfloat{\includegraphics[width=0.67\linewidth]{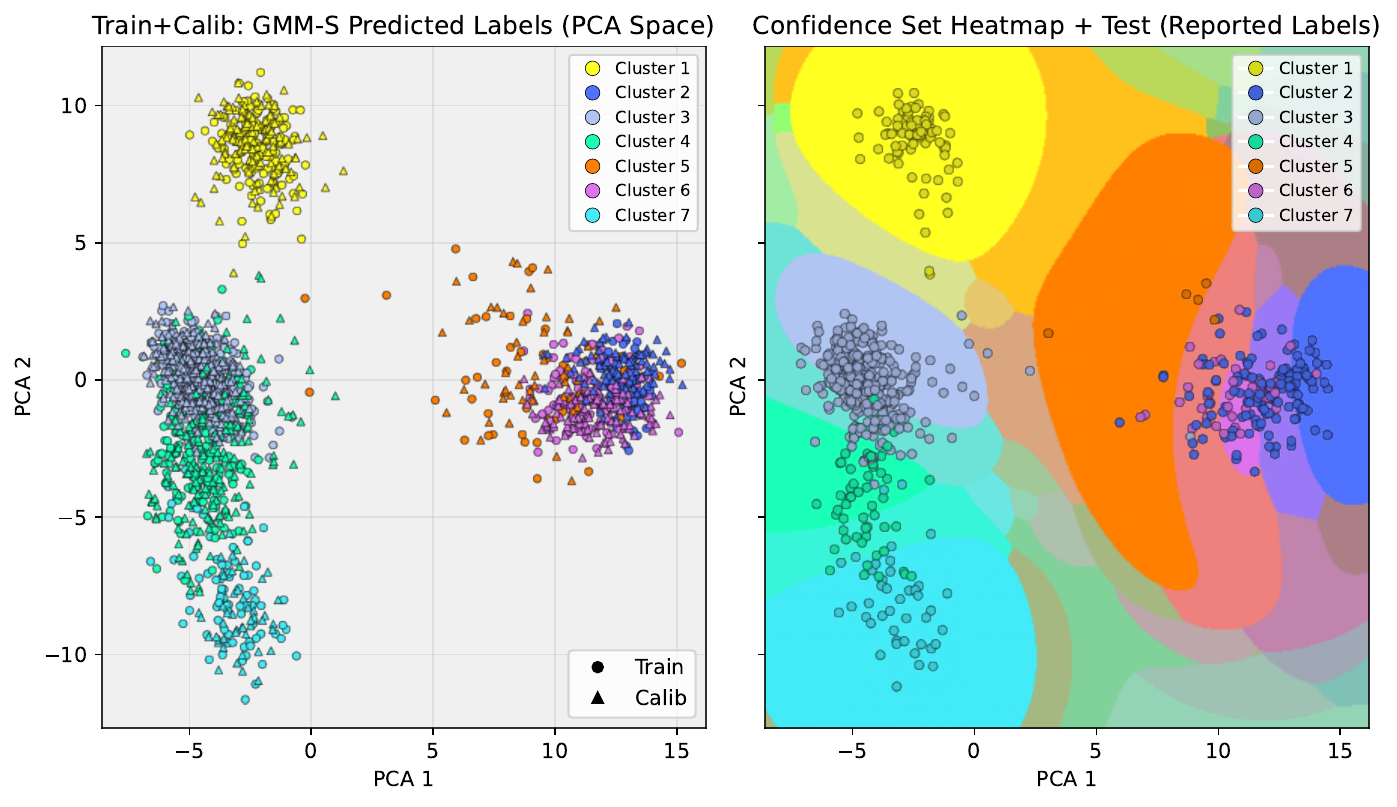}}
    \caption{Application of Algorithm \ref{alg:split_conformal_clustering_stochastic} to the PBMC single-cell RNA-seq dataset with stochastic GMM clustering. Top row: our method is applied in the original feature space and the outputs are visualized in the two-dimensional PCA space. From left to right, the plots show the reported reference labels, the predicted cluster labels, and the confidence sets for the test points. In the right most plot, marker shape encodes the set size $|\hat{\cC}(x)|$: circles for $|\hat{\cC}(x)| = 1$, triangles for $|\hat{\cC}(x)| = 2$, and squares for $|\hat{\cC}(x)| \ge 3$. Bottom row: our method is applied directly in the two-dimensional PCA space. The left plot shows the predicted cluster labels, and the right plot shows a heatmap of the confidence sets over the projected space with the test points overlaid using their reported labels. In both confidence-set visualizations, colors indicate the cluster labels contained in the confidence set.}
    \label{fig:pbmc_plots}
\end{figure}

We apply our Conformal Clustering algorithm in the original feature space using stochastic GMM clustering with $K = 7$ clusters, paired with a random forest classifier. We randomly sample $25\%$ of the data as the test set and equally split the remaining data into training and calibration sets. Given the high dimensionality ($p = 2000$), we restrict the GMM to diagonal covariance matrices to maintain computational feasibility.

We compute the confidence sets in the original feature space and visualize the results via a two-dimensional PCA projection (Figure \ref{fig:pbmc_plots}, top panel). The left plot displays the original observations colored by reference labels, the middle plot shows the cluster assignments predicted by stochastic GMM clustering for the training and calibration sets (Algorithm \ref{alg:split_conformal_clustering_stochastic}, Steps 2 and 4), and the right plot presents the resulting confidence sets for the test points, plotted in their PCA coordinates. Marker shapes indicate confidence set size: circles for singletons, triangles for pairs, and squares for sets of size $\ge 3$. We assign each cluster a base color (see legend) and color each confidence set by averaging the base colors of its contained labels. Lighter transparency reflects larger set sizes, providing a direct visual summary of cluster assignment uncertainty.

The results show B and NK cells exhibit high certainty, predominantly yielding singleton confidence sets. This is expected, as they differentiate into transcriptionally distinct populations with clear marker genes (e.g., CD79A, MS4A1, CD74 for B-cells; NKG7, GNLY, PRF1, FCGR3A for NK-cells), occupying well-separated transcriptomic regions \citep{Cai2020NK,Stubbington2017ImmuneSingleCell,schafflick2020integrated}. Conversely, greater uncertainty for CD4/CD8 T cells, monocytes, and dendritic cells reflects their underlying biology. T cells share a broad transcriptional backbone making fine-resolution annotation challenging, while monocytes and dendritic cells belong to a heterogeneous myeloid compartment with overlapping states \citep{Mullan2023TcellAnnotation,villani2017single}. Consequently, these groups naturally yield larger prediction sets than the more distinctive B and NK cells.

Although our method operates in any feature space, practitioners often prefer analyzing data in a lower-dimensional representation. Thus, we also illustrate our procedure applied directly within a PCA-projected space, again using stochastic GMM clustering ($K = 7$) and a support vector classifier. The results are displayed in the bottom panel of Figure \ref{fig:pbmc_plots}. The left plot shows the predicted cluster assignments for the training and calibration observations in the PCA space. The right plot presents a confidence-set heatmap computed over a 2D grid, colored by blending the contained labels' base colors, with overlaid test observations providing a visual anchor. These visualizations are informative, but they should be interpreted with care, since not every location in the PCA plane corresponds to a meaningful or well-supported region of the original high-dimensional space. As a result, apparent separation or confidence in the projection may sometimes be an artifact of extrapolation rather than a reflection of genuine structure. Therefore, while projected visualizations are illustrative, scientifically robust conclusions must remain anchored to the original feature space (Figure~\ref{fig:pbmc_plots} top panel). This application illustrates the potential of our conformal clustering framework as a scientifically useful downstream tool for cell-type identification while quantifying uncertainty in the cell type labels.

\section{Discussion}
\label{sec:discussion}

This paper introduces a principled framework for uncertainty quantification in clustering by constructing confidence sets for cluster labels that generalize across the feature space. We formally define the inference target and tackle the problem using a distribution-free inference perspective, making the methodology broadly applicable across different stochastic clustering algorithms. This work represents one of the first successful applications of conformal prediction to a purely unsupervised learning problem, paving the way for new methodological developments. We show that naively applying standard conformal classification techniques to predicted hard cluster labels fails to provide valid coverage guarantees, and that intuitive soft label cutoff-based methods similarly break down because they do not adapt to label uncertainty. To address this, we introduce stochastic cluster labels and a new framework for split conformal clustering; we demonstrate, both theoretically and empirically, that our approach provides a principled way to represent clustering uncertainty. Finally, we characterize the impact of estimated labels on coverage through the consistency and stability of the clustering algorithm, yielding an explicit account of the under-coverage caused by non-exchangeability and an asymptotic target coverage guarantee under mild structural assumptions. 

Several promising directions remain for future work. First, our theoretical guarantees currently assume that the number of clusters $K$ is known and fixed. In practice, $K$ is often chosen via data-driven heuristics, which introduces an additional layer of variability. Extending this framework to account for the post-selection inference of choosing $K$, or developing a conformal procedure that simultaneously selects $K$ and quantifies label uncertainty, would be highly valuable. Furthermore, since the labels generated by the clustering algorithm are stochastic, another interesting direction would be aggregating multiple conformal predictors based on different realizations or using cross-conformal-type aggregation for better statistical efficiency. Also, one could consider extending our results to cluster-conditional coverage, analogous to class-conditional coverage in \citet{sesia2025adaptivejrssb}.

Second, while our methodology fundamentally relies on stochastic labels, practitioners often favor hard-label clustering methods like K-means or hierarchical clustering. A potential bridge is to induce stochasticity via data perturbation, such as bootstrapping or subsampling, to generate an ensemble of cluster label assignments. This empirical smoothing process could adapt hard-label methods into the proposed framework while satisfying the necessary stability conditions. This would be an important direction to add to the growing literature for algorithmic stability \citep{soloff2024bagging, liang2025algorithmic}. Moreover, because the reliance on estimated labels naturally induces a form of distribution shift between the empirical cluster assignments and the latent ground truth, incorporating weighted conformal inference techniques \citep{tibshirani2019conformal} might explicitly correct for this discrepancy, potentially yielding tighter and more efficient confidence sets in finite samples.

In conclusion, by bridging the gap between unsupervised learning and distribution-free inference, our framework provides an actionable and rigorous tool for practitioners that quantifies uncertainty in cluster labels. By identifying which data points can be confidently assigned to a single cluster and which remain inherently ambiguous, this methodology substantially improves the interpretability, reliability, and trustworthiness of clustering analyses in scientific and industrial applications.

%%%%%%%%%%%%%%%%%%%%%%
%do not include in blinded version
\section*{Acknowledgments}
The authors acknowledge funding from NSF DMS-2516872.

\putbib
\end{bibunit}

\newpage
\pagenumbering{arabic}% resets `page` counter to 1
\renewcommand*{\thepage}{A\arabic{page}}
\begin{center}

\textbf{\large SUPPLEMENTARY MATERIAL}

\end{center}

\appendix
\begin{bibunit}
\section{Proofs}
\label{sec:proofs}

\subsection{Proof of Theorem \ref{thm:coverage_bound}}

To begin with, we recall the following result---originally from \citet{barber2023conformal}---on the coverage of nonexchangeable conformal inference procedures. For notational convenience, we consider a pretrained version of split conformal classification, where the soft classifier is assumed to be obtained independently of the input data. The following Theorem is adapted from Theorem 7.12 of \citet{angelopoulos2024theoretical}.

\begin{theorem}
    \label{thm:conformal_bound_nonexchangeable}
    Suppose $(X_1, Y_1^\ast), \ldots, (X_{n + 1}, Y_{n + 1}^\ast) \in \R^p \times [K]$ are random variables that are not necessarily i.i.d. Let $\hat{\cC}$ be the output of split conformal classification with $(X_1, Y_1^\ast), \ldots, (X_n, Y_n^\ast)$ as input and a fixed soft classifier that is trained independently of the data, which takes the following form: for any $x \in \R^p$, define
    \begin{equation*}
        \hat{\cC}(x) = \Big\{y \in [K] : s((x, y); \hat{\pi}) \le \lceil (1 - \alpha) (1 + |\cI_{ca}|)\rceil \text{-th smallest value of} ~ \{s_i\}_{i \in \cI_{ca}}\Big\},
    \end{equation*}
    where $s(\cdot; \hat{\pi}) \colon \R^p \times [K] \to \R$ is a fixed score function defined based on the soft classifier $\hat{\pi}$, and $s_i := s((X_i, Y_i^\ast); \hat{\pi}) \in \R$ for $i = 1, \ldots, n$. Then, we have
    \begin{equation*}
        \P\left(Y_{n + 1}^\ast \in \hat{\cC}(X_{n + 1})\right)
        \ge 1 - \alpha - \frac{1}{n + 1} \sum_{i = 1}^{n} \mathsf{TV}(\mu, \mu^{i \leftrightarrow n + 1}),  
    \end{equation*}
    where $\mu$ is the probability measure on $\R^{n + 1}$ corresponding to the joint distribution of the scores $s((X_1, Y_1^\ast); \hat{\pi}), \ldots, s((X_{n + 1}, Y_{n + 1}^\ast); \hat{\pi})$, while $\mu^{i \leftrightarrow n + 1}$ is the pushforward measure of $\mu$ by the map on $\R^{n + 1}$ that swaps the $i$-th and $(n + 1)$-th coordinates.
\end{theorem}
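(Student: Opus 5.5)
The plan is to reproduce the deterministic ``strange points'' argument of \citet{barber2023conformal} directly at the level of the score vector. Throughout, $\cI_{ca} = [n]$ and $k := \lceil (1 - \alpha)(n + 1) \rceil$. If $k > n$, the threshold is $+\infty$ by convention, so $\hat{\cC}(x) = [K]$ and the bound is trivial; assume $k \le n$ from now on.

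Write $S = (S_1, \ldots, S_{n + 1})$ with $S_i = s((X_i, Y_i^\ast); \hat{\pi})$, so that $S \sim \mu$. Since $\hat{\pi}$ and $s$ are fixed, miscoverage is exactly the event $S_{n + 1} > q_k(S_1, \ldots, S_n)$, where $q_k$ denotes the $k$-th smallest value.

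\textbf{Step 1: a symmetric reformulation of miscoverage.} For $v \in \R^{n + 1}$, define the strange set $\mathrm{Str}(v) := \{ i \in [n + 1] : v_i > q_k(v_1, \ldots, v_{n + 1}) \}$. I would show deterministically that $v_{n + 1} > q_k(v_1, \ldots, v_n)$ if and only if $n + 1 \in \mathrm{Str}(v)$.

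In either direction, $v_{n + 1}$ is not among the $k$ smallest entries of the full vector. Hence the $k$-th smallest of all $n + 1$ entries equals the $k$-th smallest of the first $n$ entries, and both implications follow. Ties cause no trouble, because everything is phrased through order statistics with strict inequalities.

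Two further facts are needed. First, $|\mathrm{Str}(v)| \le n + 1 - k \le \alpha(n + 1)$, since at least $k$ entries are $\le q_k(v)$. Second, the map $\mathrm{Str}$ is permutation-equivariant: if $v^{i \leftrightarrow n + 1}$ swaps coordinates $i$ and $n + 1$, then $n + 1 \in \mathrm{Str}(v)$ if and only if $i \in \mathrm{Str}(v^{i \leftrightarrow n + 1})$.

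\textbf{Step 2: transfer via total variation.} For each $i \in [n]$, equivariance gives
\begin{equation*}
\P(n + 1 \in \mathrm{Str}(S)) = \mu^{i \leftrightarrow n + 1}\big(\{v : i \in \mathrm{Str}(v)\}\big) \le \P(i \in \mathrm{Str}(S)) + \mathsf{TV}(\mu, \mu^{i \leftrightarrow n + 1}).
\end{equation*}
The same inequality holds trivially for $i = n + 1$ with zero TV term.

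Averaging over $i \in [n + 1]$ yields
\begin{equation*}
\P(\text{miscoverage}) \le \frac{1}{n + 1}\,\E\,|\mathrm{Str}(S)| + \frac{1}{n + 1} \sum_{i = 1}^{n} \mathsf{TV}(\mu, \mu^{i \leftrightarrow n + 1}) \le \alpha + \frac{1}{n + 1} \sum_{i = 1}^{n} \mathsf{TV}(\mu, \mu^{i \leftrightarrow n + 1}).
\end{equation*}
Taking complements gives the claim.

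\textbf{Main obstacle.} The only delicate point is the exact equivalence in Step 1, namely comparing the $k$-th order statistic of $n$ values with that of $n + 1$ values in the presence of ties. Once this is checked carefully with strict inequalities, measurability of $\{v : i \in \mathrm{Str}(v)\}$ is immediate, and the rest is a one-line TV bound plus averaging.
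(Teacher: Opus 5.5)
Your proposal is correct: the order-statistic equivalence $v_{n+1} > q_k(v_1,\ldots,v_n) \iff n+1 \in \mathrm{Str}(v)$, the count $|\mathrm{Str}(v)| \le n+1-k \le \alpha(n+1)$, and the swap bound $\mu^{i\leftrightarrow n+1}(\{v : i \in \mathrm{Str}(v)\}) \le \mu(\{v : i \in \mathrm{Str}(v)\}) + \mathsf{TV}(\mu,\mu^{i\leftrightarrow n+1})$, averaged over $i \in [n+1]$, give exactly the stated bound. The paper does not prove this theorem itself; it imports it from Barber et al.\ (2023) via Theorem 7.12 of Angelopoulos et al.\ (2024), and your strange-set argument, written directly for the score vector, is essentially the proof given there.
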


Now we prove Theorem \ref{thm:coverage_bound}. 

\begin{proof}[Proof of Theorem \ref{thm:coverage_bound}]
    As above, we consider a pretrained version of Algorithm \ref{alg:split_conformal_clustering_stochastic} where
    $\cI_{ca} = [n]$. We will concretely prove the following:
    \begin{equation}
        \label{eq:coverage_bound_pre}
        \P\left(\hat{\sigma}_o^\ast(Y_{n + 1}^\ast) \in \hat{\cC}(X_{n + 1})\right) \ge 1 - \alpha - \frac{n}{n + 1} \mathsf{E}_n(\gamma) - \frac{n}{2 (n + 1)} \mathsf{S}_n(\gamma).
    \end{equation}
    As we treat the soft classifier $\hat{\pi}$ as a fixed map, let $s(x, y)$ denote the score for $(x, y) \in \R^p \times [K]$, instead of $s((x, y); \hat{\pi})$. By Theorem \ref{thm:conformal_bound_nonexchangeable}, 
    \begin{equation*}
        \P\left(\hat{\sigma}_o^\ast(Y_{n + 1}^\ast) \in \hat{\cC}(X_{n + 1})\right) \ge 1 - \alpha - \frac{1}{n + 1} \sum_{i = 1}^{n} \mathsf{TV}(\mu, \mu^{i \leftrightarrow n + 1}),
    \end{equation*}
    where $\mu$ represents the joint distribution of $s(X_1, \hat{\sigma}(Y_1)), \ldots, s(X_n, \hat{\sigma}(Y_n)), s(X_{n + 1}, \hat{\sigma}_o^\ast(Y_{n + 1}^\ast))$, while $\mu^{i \leftrightarrow n + 1}$ is the pushforward of $\mu$ by swapping the $i$-th and $(n + 1)$-th coordinates. Note that these scores can be represented as a function of $(X_1, Y_1), \ldots, (X_n, Y_n), (X_{n + 1}, Y_{n + 1}^\ast)$, namely, 
    \begin{equation*}
        \left(s(X_1, \hat{\sigma}(Y_1)), \ldots, s(X_n, \hat{\sigma}(Y_n)), s(X_{n + 1}, \hat{\sigma}_o^\ast(Y_{n + 1}^\ast))\right) = S\left((X_1, Y_1), \ldots, (X_n, Y_n), (X_{n + 1}, Y_{n + 1}^\ast)\right)
    \end{equation*}
    for some deterministic function $S$. Let $\rho$ be the probability measure corresponding to the joint distribution of 
    \begin{equation*}
        (X_1, Y_1), \ldots, (X_n, Y_n), (X_{n + 1}, Y_{n + 1}^\ast).
    \end{equation*}
    Hence, $\mu$ is the pushforward measure of $\rho$ by the map $S$. Meanwhile, notice that $\mu^{i \leftrightarrow n + 1}$ corresponds to the joint distribution of 
    \begin{equation*}
        S\left((X_1, Y_1), \ldots, (X_{i - 1}, Y_{i - 1}), (X_{n + 1}, Y_{n + 1}^\ast), (X_{i + 1}, Y_{i + 1}), \ldots, (X_n, Y_n), (X_i, Y_i)\right).
    \end{equation*}
    Let $\rho^{i \leftrightarrow n + 1}$ be the probability measure corresponding to the joint distribution of
    \begin{equation*}
        (X_1, Y_1), \ldots, (X_{i - 1}, Y_{i - 1}), (X_{n + 1}, Y_{n + 1}^\ast), (X_{i + 1}, Y_{i + 1}), \ldots, (X_n, Y_n), (X_i, Y_i).
    \end{equation*}
    Then, $\mu^{i \leftrightarrow n + 1}$ is the pushforward measure of $\rho^{i \leftrightarrow n + 1}$ by $S$. By the data processing inequality, 
    \begin{equation*}
        \mathsf{TV}(\mu, \mu^{i \leftrightarrow n + 1}) \le \mathsf{TV}(\rho, \rho^{i \leftrightarrow n + 1}).
    \end{equation*}
    Hence,
    \begin{equation}
        \label{eq:coverage_bound_tmp}
        \P\left(\hat{\sigma}_o^\ast(Y_{n + 1}^\ast) \in \hat{\cC}(X_{n + 1})\right) \ge 1 - \alpha - \frac{1}{n + 1} \sum_{i = 1}^{n} \mathsf{TV}(\rho, \rho^{i \leftrightarrow n + 1}).
    \end{equation}
    We analyze $\mathsf{TV}(\rho, \rho^{i \leftrightarrow n + 1})$. For notational simplicity, let $i = n$, and analyze $\mathsf{TV}(\rho, \rho^{n \leftrightarrow n + 1})$. Essentially, we need to compare the total variation distance between the joint distributions of the following two sets of variables:
    \begin{align*}
        & (X_1, Y_1), \ldots, (X_{n - 1}, Y_{n - 1}), (X_n, Y_n), (X_{n + 1}, Y_{n + 1}^\ast), \\
        & (X_1, Y_1), \ldots, (X_{n - 1}, Y_{n - 1}), (X_{n + 1}, Y_{n + 1}^\ast), (X_n, Y_n).
    \end{align*}
    Notice that the joint distribution of $Y_1, \ldots, Y_n, Y_{n + 1}^\ast$ given $(X_1, \ldots, X_{n + 1}) = (x_1, \ldots, x_{n + 1})$ is 
    \begin{equation*}
        \Lambda(x_1, \ldots, x_{n + 1}) := \otimes_{j = 1}^{n - 1} \mathrm{Cat}(\hat{\gamma}_n(x_j)) \otimes \mathrm{Cat}(\hat{\gamma}_n(x_n)) \otimes \mathrm{Cat}(\gamma^\ast(x_{n + 1})),
    \end{equation*}
    where $\hat{\gamma}_n$ is fitted on $x_1, \ldots, x_n$. Meanwhile, the joint distribution of $Y_1, \ldots, Y_{n - 1}, Y_{n + 1}^\ast, Y_n$ given $(X_1, \ldots, X_{n - 1}, X_{n + 1}, X_n) = (x_1, \ldots, x_{n + 1})$ is 
    \begin{equation*}
        \Lambda'(x_1, \ldots, x_{n + 1}) := \otimes_{j = 1}^{n - 1} \mathrm{Cat}(\hat{\gamma}_{n \to n + 1}(x_j)) \otimes \mathrm{Cat}(\gamma^\ast(x_n)) \otimes \mathrm{Cat}(\hat{\gamma}_{n \to n + 1}(x_{n + 1})),
    \end{equation*}
    where $\hat{\gamma}_{n \to n + 1}$ is fitted on $x_1, \dots, x_{n - 1}, x_{n + 1}$. Since $(X_1, \ldots, X_{n + 1})$ and $(X_1, \ldots, X_{n - 1}, X_{n + 1}, X_n)$ have the same joint distribution $(P_X^\ast)^{\otimes (n + 1)}$, we have the following from Lemma \ref{lem:tv_conditional}:
    \begin{equation*}
        \mathsf{TV}(\rho, \rho^{n \leftrightarrow n + 1}) \le \int \mathsf{TV}\left(\Lambda(x_1, \ldots, x_{n + 1}), \Lambda'(x_1, \ldots, x_{n + 1})\right) \, \mathrm{d} \nu_{n + 1}(x_1, \ldots, x_{n + 1}),
    \end{equation*}
    where $\nu_m$ denotes $(P_X^\ast)^{\otimes m}$ for any $m \in \N$. As $\mathsf{TV}(P_1 \otimes P_2, Q_1 \otimes Q_2) \le \sum_{i = 1}^{2} \mathsf{TV}(P_i, Q_i)$, we have
    \begin{equation*}
        \begin{split}
            \mathsf{TV}(\rho, \rho^{n \leftrightarrow n + 1}) & \le \int \mathsf{TV}\left(\otimes_{j = 1}^{n - 1} \mathrm{Cat}(\hat{\gamma}_n(x_j)), \otimes_{j = 1}^{n - 1} \mathrm{Cat}(\hat{\gamma}_{n \to n + 1}(x_j))\right) \, \mathrm{d} \nu_{n + 1}(x_1, \ldots, x_{n + 1}) \\
            & \quad + \int \mathsf{TV}\left(\mathrm{Cat}(\hat{\gamma}_n(x_n)), \mathrm{Cat}(\gamma^\ast(x_n))\right) \, \mathrm{d} \nu_n(x_1, \ldots, x_n) \\
            & \quad + \int \mathsf{TV}\left(\mathrm{Cat}(\hat{\gamma}_{n \to n + 1}(x_{n + 1})), \mathrm{Cat}(\gamma^\ast(x_{n + 1}))\right) \, \mathrm{d} \nu_n(x_1, \ldots, x_{n - 1}, x_{n + 1}),
        \end{split}
    \end{equation*}
    where we note that the last two terms are equal due to the change of variables. Notice that for any $\gamma_1, \ldots, \gamma_m \in \Delta_K$ and $\gamma_1', \ldots, \gamma_m' \in \Delta_K$, we have
    \begin{equation*}
        \mathsf{TV}\left(\otimes_{j = 1}^{m} \mathrm{Cat}(\gamma_j), \otimes_{j = 1}^{m} \mathrm{Cat}(\gamma_j')\right) = \frac{1}{2} \Big\|\motimes_{j = 1}^{m} \gamma_j - \motimes_{j = 1}^{m} \gamma_j'\Big\|_1.
    \end{equation*}
    Hence,
    \begin{equation*}
        \begin{split}
            \mathsf{TV}(\rho, \rho^{n \leftrightarrow n + 1})
            & \le \frac{1}{2} \int \Big\|\motimes_{j = 1}^{n - 1} \hat{\gamma}_n(x_j) - \motimes_{j = 1}^{n - 1} \hat{\gamma}_{n \to n + 1}(x_j)\Big\|_1 \, \mathrm{d} \nu_{n + 1}(x_1, \ldots, x_{n + 1}) \\
            & \quad + \int \|\hat{\gamma}_n(x_n) - \gamma^\ast(x_n)\|_1 \, \mathrm{d} \nu_n(x_1, \ldots, x_n).
        \end{split}
    \end{equation*}
    From this, we deduce that 
    \begin{equation*}
        \sum_{i = 1}^{n} \mathsf{TV}(\rho, \rho^{i \leftrightarrow n + 1})
        \le n \mathsf{E}_n(\gamma) + \frac{n}{2} \mathsf{S}_n(\gamma).
    \end{equation*}
    Combining this with \eqref{eq:coverage_bound_tmp}, we obtain \eqref{eq:coverage_bound_pre}. Substituting $n / 2$ for $n$ to reflect the equal data splitting used in the algorithm yields the target bound \eqref{eq:coverage_bound_split}.
\end{proof}

\subsection{Proof of Theorem \ref{thm:asymptotic_coverage}}
\begin{proof}[Proof of Theorem \ref{thm:asymptotic_coverage}]
    Since $\|\hat{\gamma}_n(X_1) - \gamma^\ast(X_1)\|_1 \le 2$ by definition, $\|\hat{\gamma}_n(X_1) - \gamma^\ast(X_1)\|_1 \xrightarrow{p} 0$ implies the convergence in expectation $\E \|\hat{\gamma}_n(X_1) - \gamma^\ast(X_1)\|_1 \to 0$. Hence, by symmetry of $\gamma$, we have $\mathsf{E}_n(\gamma) = \E \|\hat{\gamma}_n(X_1) - \gamma^\ast(X_1)\|_1 \to 0$. 

    Similarly, by symmetry of $\gamma$, the stability term $\mathsf{S}_n(\gamma)$ equals the expectation of 
    \begin{equation*}
        \Xi_n := \Big\|\motimes_{j = 2}^{n} \hat{\gamma}_n(X_j) - \motimes_{j = 2}^{n} \hat{\gamma}_{1 \to n + 1}(X_j)\Big\|_1 = 2 \mathsf{TV}\left(\motimes_{j = 2}^{n} \hat{\gamma}_n(X_j), \motimes_{j = 2}^{n} \hat{\gamma}_{1 \to n + 1}(X_j)\right).
    \end{equation*}
    As $\Xi_n \le 2$, it suffices to show that $\Xi_n \xrightarrow{p} 0$. Again, by $\mathsf{TV} \le \sqrt{2 H^2}$ and the property of the Hellinger distance for product distributions, we have
    \begin{equation*}
        \begin{split}
            \Xi_n
            & \le 2 \sqrt{2 H^2\left(\motimes_{j = 2}^{n} \hat{\gamma}_n(X_j), \motimes_{j = 2}^{n} \hat{\gamma}_{1 \to n + 1}(X_j)\right)} \\
            & = 2 \sqrt{2 \left(1 - \prod_{j = 2}^{n} \left(1 - H^2\left(\hat{\gamma}_n(X_j), \hat{\gamma}_{1 \to n + 1}(X_j)\right)\right)\right)}.
        \end{split}
    \end{equation*}
    To prove $\Xi_n \xrightarrow{p} 0$, note that it suffices to show that
    \begin{equation*}
        \sum_{j = 2}^{n} \log\left(1 - H^2\left(\hat{\gamma}_n(X_j), \hat{\gamma}_{1 \to n + 1}(X_j)\right)\right) \xrightarrow{p} 0.
    \end{equation*}
    By Lemma \ref{lem:sum_convergence_to_log}, it suffices to show that 
    \begin{equation}
        \label{eq:hellinger_sum_convergence}
        \sum_{j = 2}^{n} H^2\left(\hat{\gamma}_n(X_j), \hat{\gamma}_{1 \to n + 1}(X_j)\right) \xrightarrow{p} 0.
    \end{equation}
    By the given condition \eqref{eq:marginal_stability} and symmetry of $\gamma$, we deduce that
    \begin{equation*}
        \sum_{j = 2}^{n} H^2\left(\hat{\gamma}_n(X_j), \hat{\gamma}_{1 \to n + 1}(X_j)\right) = (n - 1) o_p(n^{-1}) \xrightarrow{p} 0.
    \end{equation*}
    Hence, $\Xi_n \xrightarrow{p} 0$ and $\mathsf{S}_n(\gamma) = \E[\Xi_n] \to 0$.
\end{proof}

\subsection{Proof of Theorem \ref{thm:parametric_mixture_coverage}}
First, let us rewrite $h$ from Theorem \ref{thm:parametric_mixture_coverage} as follows:
\begin{equation*}
    h(x, \theta) = ([h(x, \theta)]_1, \ldots, [h(x, \theta)]_K) \quad \forall x \in \R^p.
\end{equation*}
We can think of this as a probability mass function $k \mapsto [h(x, \theta)]_k$. Then, the corresponding score function is a collection of maps
\begin{equation*}
    k \mapsto \frac{\partial \log [h(x, \theta)]_k}{\partial \theta_j} \quad \forall j,
\end{equation*}
which is well-defined as $h$ is smooth with respect to the parameter $\theta$.\footnote{One technicality here is that $\theta$ lies in some manifold. By reparametrizing the mixing proportions $w_1, \ldots, w_K$ and the covariance matrices $\Sigma_1, \ldots, \Sigma_K$, we can treat $\theta$ as a variable in some open subset $\Theta$ in some Euclidean space.} 

The following lemma establishes the usual identity of the score function and the Fisher information and the second-order Taylor expansion for the squared Hellinger distance.

\begin{lemma}
    \label{lem:conditional_density_distance_approx}
    For $h$ in Theorem \ref{thm:parametric_mixture_coverage}, the score function satisfies the following identity: for any $j$, we have
    \begin{equation}
        \label{eq:score_identity}
        \sum_{k = 1}^{K} [h(x, \theta)]_k \frac{\partial \log [h(x, \theta)]_k}{\partial \theta_j} = \sum_{k = 1}^{K} \frac{\partial [h(x, \theta)]_k}{\partial \theta_j} = 0,
    \end{equation}
    and also for any pair $(j, \ell)$, we have
    \begin{equation}
        \label{eq:score_identity2}
        \sum_{k = 1}^{K} \frac{\partial^2 [h(x, \theta)]_k}{\partial \theta_j \partial \theta_\ell} = 0.
    \end{equation}
    Accordingly, the Fisher information matrix is well-defined, which we denote as $I_x(\theta)$ whose $(j, \ell)$-th entry is defined by
    \begin{equation*}
        [I_x(\theta)]_{j \ell} := \sum_{k = 1}^{K} [h(x, \theta)]_k \frac{\partial \log [h(x, \theta)]_k}{\partial \theta_j} \frac{\partial \log [h(x, \theta)]_k}{\partial \theta_\ell} = \sum_{k = 1}^{K} \frac{1}{[h(x, \theta)]_k} \frac{\partial [h(x, \theta)]_k}{\partial \theta_j} \frac{\partial [h(x, \theta)]_k}{\partial \theta_\ell}.
    \end{equation*}
    Similarly, the following usual identity for the Fisher information matrix also holds: for every pair $(j, \ell)$, we have
    \begin{equation*}
        [I_x(\theta)]_{j \ell} = - \sum_{k = 1}^{K} [h(x, \theta)]_k \frac{\partial^2 \log [h(x, \theta)]_k}{\partial \theta_j \partial \theta_\ell}.        
    \end{equation*}
    Then, the squared Hellinger distance between two probability mass functions $k \mapsto [h(x, \theta_0)]_k$ and $k \mapsto [h(x, \theta)]_k$ satisfies the following:
    \begin{equation}
        \label{eq:hellinger_approx}
        H^2(h(x, \theta_0), h(x, \theta)) = \frac{1}{8} \langle \theta - \theta_0, I_x(\theta_0) (\theta - \theta_0) \rangle + o(\|\theta - \theta_0\|^2).
    \end{equation}
\end{lemma}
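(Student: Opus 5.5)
The plan is to treat $k \mapsto [h(x,\theta)]_k$ as a smooth family of probability mass functions on $[K]$ with $x$ fixed. The claims will then follow from two facts: the total mass is identically one, and a Taylor expansion of $\sqrt{[h(x,\theta)]_k}$ around $\theta_0$. Throughout I assume $[h(x,\theta)]_k > 0$ for all $k$. This is automatic for mixture posteriors with positive weights and densities, and it is needed anyway for the log-scores to be defined.

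\textbf{Step 1: the two identities.} Since $\sum_{k=1}^K [h(x,\theta)]_k = 1$ for every $\theta$ in the open set $\Theta$, differentiating once in $\theta_j$ gives $\sum_k \partial_{\theta_j}[h]_k = 0$. Differentiating once more in $\theta_\ell$ gives \eqref{eq:score_identity2}. The first equality in \eqref{eq:score_identity} is just $[h]_k\,\partial_{\theta_j}\log[h]_k = \partial_{\theta_j}[h]_k$. The two expressions for $I_x(\theta)$ agree by the same substitution. For the second Fisher identity, write
\[
\partial_{\theta_j}\partial_{\theta_\ell}\log[h]_k = \frac{\partial_{\theta_j}\partial_{\theta_\ell}[h]_k}{[h]_k} - \frac{\partial_{\theta_j}[h]_k\,\partial_{\theta_\ell}[h]_k}{[h]_k^2}.
\]
Multiplying by $[h]_k$, summing over $k$, and using \eqref{eq:score_identity2} to kill the first term yields $-[I_x(\theta)]_{j\ell}$. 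All sums are finite, so there are no interchange-of-limit issues.

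\textbf{Step 2: the Hellinger expansion.} Set $g_k(\theta) = \sqrt{[h(x,\theta)]_k}$. It is $C^2$ near $\theta_0$ because $h$ is smooth and positive there. Let $\delta = \theta - \theta_0$. A first-order Taylor expansion gives $g_k(\theta) - g_k(\theta_0) = \langle \nabla g_k(\theta_0), \delta\rangle + O(\|\delta\|^2)$, with
\[
\nabla g_k = \frac{\nabla [h]_k}{2\sqrt{[h]_k}}.
\]
Then
\[
H^2 = \frac12 \sum_k \big(g_k(\theta)-g_k(\theta_0)\big)^2 = \frac12\sum_k \langle \nabla g_k(\theta_0),\delta\rangle^2 + O(\|\delta\|^3).
\]
The main term equals
\[
\frac12\sum_k \frac{\langle \nabla[h]_k,\delta\rangle^2}{4[h]_k} = \frac18\langle \delta, I_x(\theta_0)\delta\rangle,
\]
by the second expression for $I_x$. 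The remainder $O(\|\delta\|^3)$ is $o(\|\delta\|^2)$, which gives \eqref{eq:hellinger_approx}.

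\textbf{Expected obstacle.} The only real subtlety is the positivity and regularity requirement. One needs $[h(x,\theta_0)]_k > 0$ so that $\sqrt{\cdot}$ and $\log$ are differentiable. This keeps the Taylor remainder controlled by bounded second derivatives of $g_k$ on a neighborhood of $\theta_0$, uniformly over the finitely many $k$.

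If some coordinate could vanish, I would drop the $k$ with $[h]_k(\theta_0) = 0$ from the Fisher sum. Since $[h]_k \ge 0$ has a minimum at $\theta_0$, its gradient vanishes there. That coordinate then contributes $(\sqrt{[h]_k(\theta)})^2 = [h]_k(\theta) = O(\|\delta\|^2)$ to $H^2$, which is only $O$, not $o$. So I would simply state the positivity assumption, which holds for standard mixture models.

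The reparametrization footnote guarantees that $\Theta$ is open, so that derivatives in all coordinates make sense.
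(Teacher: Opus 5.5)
Your proof is correct. Step 1 matches the paper: differentiate $\sum_k [h(x,\theta)]_k = 1$ once and twice, then substitute $[h]_k \partial_{\theta_j} \log [h]_k = \partial_{\theta_j}[h]_k$. Your Hellinger expansion, however, takes a genuinely different route.

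The paper works with the Bhattacharyya coefficient $B(\theta) = \sum_k \sqrt{[h(x,\theta_0)]_k [h(x,\theta)]_k} = 1 - H^2$. It uses \eqref{eq:score_identity} to get $\nabla B(\theta_0) = 0$, and \eqref{eq:score_identity2} to reduce $\nabla^2 B(\theta_0)$ to $-\frac{1}{4} I_x(\theta_0)$. It then applies a second-order Taylor expansion of $B$. This is why the two score identities are packaged into the lemma: they are exactly what cancels the linear term and simplifies the Hessian.

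You instead use the fact that $H^2$ is already a sum of squared increments of $g_k = \sqrt{[h]_k}$. A first-order expansion of each $g_k$ then suffices, and the score identities never enter the Hellinger step.

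Your route buys two things.
\begin{itemize}
\item It needs less regularity. Mere differentiability of $\sqrt{[h]_k}$ at $\theta_0$ gives an $o(\|\delta\|)$ increment error, hence $o(\|\delta\|^2)$ after squaring. This is the familiar differentiability-in-quadratic-mean argument. The paper's route needs $B$ twice differentiable and requires the Hessian bookkeeping.
\item Under your $C^2$ assumption, you get an explicit $O(\|\delta\|^3)$ remainder rather than an unspecified $o(\|\delta\|^2)$.
\end{itemize}

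The paper's route, in exchange, ties the expansion directly to the Fisher identities it has just stated. You also make explicit the positivity assumption $[h(x,\theta_0)]_k > 0$, which the paper uses only implicitly when it divides by $[h(x,\theta)]_k$.
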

\begin{proof}
    Since $\sum_{k = 1}^{K} [h(x, \theta)]_k = 1$ by construction, we have the usual identity of the score function. This confirms \eqref{eq:score_identity} and \eqref{eq:score_identity2}. Similarly, one can check the usual identity for the Fisher information matrix. Now, let
    \begin{equation*}
        1 - H^2(h(x, \theta_0), h(x, \theta)) = \sum_{k = 1}^{K} \sqrt{[h(x, \theta_0)]_k [h(x, \theta)]_k} =: B(\theta).
    \end{equation*}
    Note that $B(\theta_0) = 1$ and
    \begin{equation*}
        [\nabla B(\theta)]_j = \sum_{k = 1}^{K} \frac{\sqrt{[h(x, \theta_0)]_k}}{2 \sqrt{[h(x, \theta)]_k}} \frac{\partial [h(x, \theta)]_k}{\partial \theta_j},
    \end{equation*}
    which implies that $\nabla B(\theta_0) = 0$ by \eqref{eq:score_identity}. Also, the Hessian of $B$ is given by
    \begin{equation*}
        [\nabla^2 B(\theta)]_{j \ell} = \sum_{k = 1}^{K} \frac{\sqrt{[h(x, \theta_0)]_k}}{2 \sqrt{[h(x, \theta)]_k}} \frac{\partial^2 [h(x, \theta)]_k}{\partial \theta_j \partial \theta_\ell} - \sum_{k = 1}^{K} \frac{\sqrt{[h(x, \theta_0)]_k}}{4 [h(x, \theta)]_k \sqrt{[h(x, \theta)]_k}} \frac{\partial [h(x, \theta)]_k}{\partial \theta_\ell} \frac{\partial [h(x, \theta)]_k}{\partial \theta_j}.
    \end{equation*}
    By \eqref{eq:score_identity2}, we have 
    \begin{equation*}
        [\nabla^2 B(\theta_0)]_{j \ell} = - \sum_{k = 1}^{K} \frac{1}{4 [h(x, \theta_0)]_k} \frac{\partial [h(x, \theta)]_k}{\partial \theta_\ell} \bigg \vert_{\theta = \theta_0} \frac{\partial [h(x, \theta)]_k}{\partial \theta_j} \bigg \vert_{\theta = \theta_0} = -\frac{1}{4} [I_x(\theta_0)]_{j \ell}.
    \end{equation*}
    Since $B$ is a smooth function, we have 
    \begin{equation*}
        B(\theta) = B(\theta_0) + \langle \nabla B(\theta_0), \theta - \theta_0 \rangle + \frac{1}{2} \langle \theta - \theta_0, \nabla^2 B(\theta_0) (\theta - \theta_0) \rangle + o(\|\theta - \theta_0\|^2).
    \end{equation*}
    Hence, we have \eqref{eq:hellinger_approx}.
\end{proof}

Now, we prove Theorem \ref{thm:parametric_mixture_coverage}.
\begin{proof}[Proof of Theorem \ref{thm:parametric_mixture_coverage}]
    Recall that the consistent root $\hat{\theta}_n$ satisfies
    \begin{equation*}
        \hat{\theta}_n \xrightarrow{p} \theta^\ast \quad \text{and} \quad \hat{\theta}_n - \theta^\ast = O_p(n^{-1 / 2}).
    \end{equation*}
    We claim that $\|\hat{\gamma}_n(X_1) - \gamma^\ast(X_1)\|_1 \xrightarrow{p} 0$. As $\gamma^\ast(X_1) = h(X_1, \theta^\ast)$ and $\hat{\gamma}_n(X_1) = h(X_1, \hat{\theta}_n)$ for $h$ above, $(X_1, \hat{\theta}_n) \xrightarrow{p} (X_1, \theta^\ast)$ implies $\hat{\gamma}_n(X_1) \xrightarrow{p} \gamma^\ast(X_1)$ and $\|\hat{\gamma}_n(X_1) - \gamma^\ast(X_1)\|_1 \xrightarrow{p} 0$ by the continuous mapping theorem. Hence, we have $\mathsf{E}_n(\gamma) = \E \|\hat{\gamma}_n(X_1) - \gamma^\ast(X_1)\|_1 \to 0$ as in the proof of Theorem \ref{thm:asymptotic_coverage}.

    For the stability term, we again prove \eqref{eq:hellinger_sum_convergence} by showing \eqref{eq:marginal_stability}. Now, let $\hat{\theta}_n'$ be a consistent root of the likelihood equations based on $X_{n + 1}, X_2, \ldots, X_n$. Then, $\hat{\gamma}_{1 \to n + 1}(X_j) = h(X_j, \hat{\theta}_n')$. From \eqref{eq:asymptotic_linearity}, we have 
    \begin{equation*}
        \hat{\theta}_n' - \hat{\theta}_n = \frac{1}{n} (\psi(X_{n + 1}) - \psi(X_1)) + O_p(n^{-1}) = O_p(n^{-1}).
    \end{equation*} 
    Next, by Lemma \ref{lem:conditional_density_distance_approx}, we have
    \begin{equation*}
        \begin{split}
            H^2\left(\hat{\gamma}_n(X_2), \hat{\gamma}_{1 \to n + 1}(X_2)\right) 
            & = H^2(h(X_2, \hat{\theta}_n), h(X_2, \hat{\theta}_n')) \\
            & = \frac{1}{8} \langle \hat{\theta}_n' - \hat{\theta}_n, I_{X_2}(\hat{\theta}_n) (\hat{\theta}_n' - \hat{\theta}_n) \rangle + o(\|\hat{\theta}_n' - \hat{\theta}_n\|^2).
        \end{split}
    \end{equation*}
    Since $\hat{\theta}_n' - \hat{\theta}_n = O_p(n^{-1})$, we have $o(\|\hat{\theta}_n' - \hat{\theta}_n\|^2) = o_p(n^{-2})$ by Lemma 2.12 of \citet{vaart_1998}. Now, consider the vectorization of $I_x(\theta)$ and the Jacobian $J_x(\theta)$ of $\theta \mapsto I_x(\theta)$. Since $\theta \mapsto I_x(\theta)$ is smooth, the following first-order approximation holds:
    \begin{equation*}
        I_{X_2}(\hat{\theta}_n) = I_{X_2}(\theta^\ast) + J_{X_2}(\theta^\ast) (\hat{\theta}_n - \theta^\ast) + o(\|\hat{\theta}_n - \theta^\ast\|).
    \end{equation*}
    As $\hat{\theta}_n - \theta^\ast = O_p(n^{-1 / 2})$, we have $o(\|\hat{\theta}_n - \theta^\ast\|) = o_p(n^{-1 / 2})$ by Lemma 2.12 of \citet{vaart_1998}. Hence,
    \begin{equation*}
        I_{X_2}(\hat{\theta}_n) = I_{X_2}(\theta^\ast) + J_{X_2}(\theta^\ast) (\hat{\theta}_n - \theta^\ast) + o_p(n^{-1 / 2}) = I_{X_2}(\theta^\ast) + o_p(1),
    \end{equation*}
    where the last equality follows as $J_{X_2}(\theta^\ast) = O_p(1)$ and $\hat{\theta}_n - \theta^\ast = o_p(1)$. Hence, $I_{X_2}(\hat{\theta}_n) = O_p(1)$. Accordingly, we have
    \begin{equation*}
        \left|\left\langle \hat{\theta}_n' - \hat{\theta}_n, I_{X_2}(\hat{\theta}_n) (\hat{\theta}_n' - \hat{\theta}_n) \right\rangle\right| 
        \le \|I_{X_2}(\hat{\theta}_n)\|_{\mathrm{op}} \cdot \|\hat{\theta}_n' - \hat{\theta}_n\|^2
        = O_p(1) \times O_p(n^{-2})
        = O_p(n^{-2}),
    \end{equation*}
    where $\|\cdot\|_{\mathrm{op}}$ is the operator norm. Since $O_p(n^{-2})$ is also $o_p(n^{-1})$, we have \eqref{eq:marginal_stability}. Hence, as in the proof of Theorem \ref{thm:asymptotic_coverage}, we have \eqref{eq:hellinger_sum_convergence} and thus $\mathsf{S}_n(\gamma) \to 0$.
\end{proof}

\subsection{Auxiliary Lemmas}
\begin{lemma}
    \label{lem:tv_conditional}
    Let $\cX, \cY$ be two measurable spaces. Let $(X, Y)$ and $(X', Y')$ be random variables taking values in the product measurable space $\cX \times \cY$, whose distributions are denoted as $P$ and $P'$, respectively. Suppose that the space $\cY$ is regular\footnote{For instance, see Theorem 2.18 of Chapter 4 of \citet{cinlar_2011}.} so that the conditional laws $Y \, | \, X = x$ and $Y' \, | \, X' = x$ exist, which we denote as $P |_x$ and $P' |_x$, respectively. If $X$ and $X'$ have the same marginal distribution, say, $P_X$ on $\cX$, then we have
    \begin{equation}
        \label{eq:tv_conditional_bound}
        \mathsf{TV}(P, P') \le \int_{\cX} \mathsf{TV}(P |_x, P' |_x) \, \mathrm{d} P_X(x).
    \end{equation}
\end{lemma}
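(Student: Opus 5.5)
We prove \eqref{eq:tv_conditional_bound} through the variational (coupling-free) characterization $\mathsf{TV}(P, P') = \sup_{A} |P(A) - P'(A)|$, with the supremum taken over measurable $A \subset \cX \times \cY$. Write $A_x = \{y \in \cY : (x, y) \in A\}$ for the section of $A$ at $x$; it is measurable for every $x$. Because $\cY$ is regular, the conditional laws $P|_x$ and $P'|_x$ exist as probability kernels, and the disintegration formula holds:
\begin{equation*}
    P(A) = \int_{\cX} P|_x(A_x) \, \mathrm{d} P_X(x), \qquad P'(A) = \int_{\cX} P'|_x(A_x) \, \mathrm{d} P_X(x).
\end{equation*}
Both integrals are taken against the \emph{same} marginal $P_X$, which is exactly where the hypothesis that $X$ and $X'$ share a marginal is used. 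The map $x \mapsto P|_x(A_x)$ is measurable, by the standard kernel-measurability result (monotone class argument starting from rectangles).

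Subtracting the two integrals and moving the absolute value inside gives
\begin{equation*}
    |P(A) - P'(A)| \le \int_{\cX} |P|_x(A_x) - P'|_x(A_x)| \, \mathrm{d} P_X(x) \le \int_{\cX} \mathsf{TV}(P|_x, P'|_x) \, \mathrm{d} P_X(x).
\end{equation*}
The second inequality holds because $A_x$ is one particular measurable subset of $\cY$, so the pointwise difference is at most the supremum defining $\mathsf{TV}(P|_x, P'|_x)$. The right-hand side does not depend on $A$, so taking the supremum over $A$ yields \eqref{eq:tv_conditional_bound}.

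The only delicate point is measurability of $x \mapsto \mathsf{TV}(P|_x, P'|_x)$, needed for the integral to make sense. There are two ways to handle it.
\begin{itemize}
    \item Read the right-hand side as an outer integral. The bound is unaffected.
    \item Obtain measurability directly when $\cY$ is countably generated, which holds for standard Borel spaces and in particular for the finite label spaces used in the application. In that case the supremum over sets reduces to a supremum over a countable generating algebra, and a countable supremum of measurable functions is measurable.
\end{itemize}
In the application to Theorem \ref{thm:coverage_bound}, $\cY = [K]^{n+1}$ is finite. The total variation is then $\frac{1}{2}\sum_y |P|_x(\{y\}) - P'|_x(\{y\})|$, a finite sum of measurable functions, so this issue disappears entirely.
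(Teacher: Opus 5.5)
Your proposal is correct and follows the paper's proof essentially line for line. Both arguments disintegrate $P(A)$ and $P'(A)$ against the common marginal $P_X$ via the sections $A_x$, bound the integrand pointwise by $\mathsf{TV}(P|_x, P'|_x)$, and take the supremum over $A$; your added remark on the measurability of $x \mapsto \mathsf{TV}(P|_x, P'|_x)$ (trivial in the finite-label application) covers a point the paper leaves implicit.
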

\begin{proof}
    Let $\cM$ be the collection of all measurable subsets of $\cX \times \cY$. For any $A \in \cM$ and $x \in \cX$, let $A_x = \{y \in \cY : (x, y) \in A\}$. Then, 
    \begin{equation*}
        \begin{split}
            P(A) 
            & = \int_{\cX \times \cY} 1\{(x, y) \in A\} \, \mathrm{d} P(x, y) \\
            & = \int_{\cX} \int_{\cY} 1\{y \in A_x\} \, \mathrm{d} P |_x(y) \mathrm{d} P_X(x) \\
            & = \int_{\cX} P |_x(A_x) \, \mathrm{d} P_X(x).
        \end{split}
    \end{equation*}
    Similarly, $P'(A) = \int_{\cX} P'|_x(A_x) \, \mathrm{d} P_X(x)$. Hence, for any $A \in \cM$, 
    \begin{equation*}
        \begin{split}
            |P(A) - P'(A)| 
            & = \left|\int_{\cX} \left(P |_x(A_x) - P' |_x(A_x)\right) \, \mathrm{d} P_X(x)\right| \\
            & \le \int_{\cX} \left|P |_x(A_x) - P' |_x(A_x)\right| \, \mathrm{d} P_X(x) \\
            & \le \int_{\cX} \mathsf{TV}(P |_x, P' |_x) \, \mathrm{d} P_X(x).
        \end{split}
    \end{equation*}
    As this is true for any $A \in \cM$, we have \eqref{eq:tv_conditional_bound}.
\end{proof}

\begin{lemma}
    \label{lem:sum_convergence_to_log}
    Consider a triangular array of random variables $\{(a_{n 1}, \ldots, a_{n n}) : n \in \N\}$, where $a_{n i} \in [0, 1]$ for all $n, i$. Let $S_n = \sum_{i = 1}^{n} a_{n i}$ and $L_n := \sum_{i = 1}^{n} \log(1 - a_{n i})$. Then, $S_n \xrightarrow{p} 0$ implies $L_n \xrightarrow{p} 0$.
\end{lemma}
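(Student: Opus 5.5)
The plan is to control $\log(1-a)$ from below by a multiple of $-a$ on an event of high probability, and then to pass this bound through the sum. Note first that $L_n \le 0$ always, since $\log(1-a_{ni}) \le 0$. So it suffices to show that $P(L_n < -\epsilon) \to 0$ for every $\epsilon > 0$. (If some $a_{ni}=1$, then $L_n=-\infty$; that event forces $S_n \ge 1$, so it has vanishing probability and is handled together with the event below.)

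The key elementary fact is that for $a \in [0, 1/2]$ we have
\begin{equation*}
\log(1-a) \ge -2a,
\end{equation*}
or more generally $\log(1-a) \ge -a/(1-a)$. The first bound is checked by concavity of $a \mapsto \log(1-a) + 2a$ on $[0,1/2]$: this function is $0$ at $a=0$ and $\log(1/2)+1>0$ at $a=1/2$. Now work on the event $A_n = \{S_n \le 1/2\}$. On $A_n$ every term satisfies $a_{ni} \le S_n \le 1/2$, because all the $a_{ni}$ are nonnegative. Hence on $A_n$,
\begin{equation*}
0 \ge L_n \ge -2\sum_{i=1}^n a_{ni} = -2S_n .
\end{equation*}

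It follows that
\begin{equation*}
P(|L_n|>\epsilon) \le P(A_n^c) + P(2S_n > \epsilon) = P(S_n>1/2) + P(S_n>\epsilon/2),
\end{equation*}
and both probabilities tend to zero because $S_n \xrightarrow{p} 0$. This gives $L_n \xrightarrow{p} 0$.

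The only delicate step is this reduction to a uniform bound on the individual terms. It comes for free because the terms are nonnegative, so $\max_i a_{ni} \le S_n$, which keeps every term in the region where the logarithm is controlled linearly.
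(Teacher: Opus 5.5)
Your proof is correct and follows the paper's argument: restrict to $\{S_n \le 1/2\}$, where every $a_{ni} \le 1/2$, bound $L_n \ge -2S_n$ there, and conclude $\P(|L_n|>\epsilon) \le \P(S_n>1/2)+\P(S_n>\epsilon/2)$. The one small difference is that you prove $\log(1-a)\ge -2a$ directly by concavity, whereas the paper uses $\log(1-z)\ge -z-z^2$ together with $a_{ni}^2\le a_{ni}$; both give the same $-2S_n$ bound.
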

\begin{proof}
    Fix $\epsilon > 0$. We need to show $\lim_{n \to \infty} \P(|L_n| > \epsilon) = 0$. One can verify that $\log(1 - z) \ge -z - z^2$ for $z \in [0, \frac{1}{2}]$. If $S_n \le \frac{1}{2}$, we have $a_{n 1}, \ldots, a_{n n} \le \frac{1}{2}$ and thus $L_n \ge -S_n - \sum_{i = 1}^{n} a_{n i}^2 \ge - 2 S_n$, where the last inequality follows from $a_{n i} \in [0, 1]$. Therefore, $(|L_n| > \epsilon) \cap \left(S_n \le \frac{1}{2}\right)$ implies $-2 S_n < -\epsilon$. Hence, 
    \begin{equation*}
        \begin{split}
            \P(|L_n| > \epsilon) 
            & = \P\left((|L_n| > \epsilon) \cap \left(S_n > \frac{1}{2}\right)\right) + \P\left((|L_n| > \epsilon) \cap \left(S_n \le \frac{1}{2}\right)\right) \\
            & \le \P\left(S_n > \frac{1}{2}\right) + \P\left(S_n > \frac{\epsilon}{2}\right).
        \end{split}
    \end{equation*}
    Since $S_n \xrightarrow{p} 0$, we conclude $\lim_{n \to \infty} \P(|L_n| > \epsilon) = 0$.
\end{proof}
\section{Additional Empirical Results}
\label{sec:additional_empirical_results}

\subsection{Simulations with Stochastic Fuzzy-C-Means Clustering}
We repeat the two-dimensional Gaussian Mixture Model (GMM) simulation in Section \ref{sec:simulations} with stochastic fuzzy-c-means (FCM) clustering instead of stochastic GMM clustering. FCM is a popular clustering algorithm that allows for soft cluster assignments, and its stochastic version can be implemented by sampling the cluster labels according to the soft assignments as in Definition \ref{def:stochastic_clustering}. The fuzziness parameter $m$ in FCM controls the degree of softness in the cluster assignments, with larger $m$ leading to softer assignments, while $m \to 1$ corresponds to hard clustering similar to K-means. We consider three values of $m \in \{1.4, 1.7, 2.0\}$ to represent different levels of softness in the cluster assignments. 

\begin{figure}[!ht]
    \centering
    \includegraphics[width=0.75\textwidth]{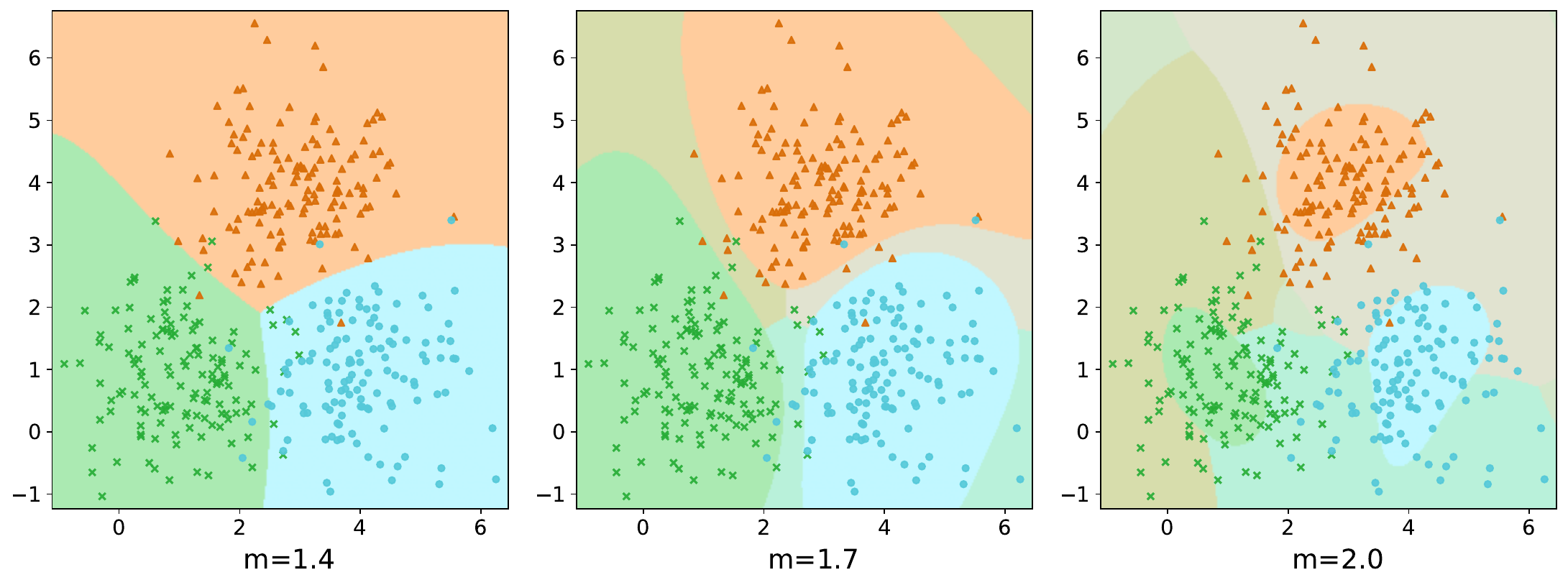}

    \vspace*{10pt}

    \includegraphics[width=\textwidth]{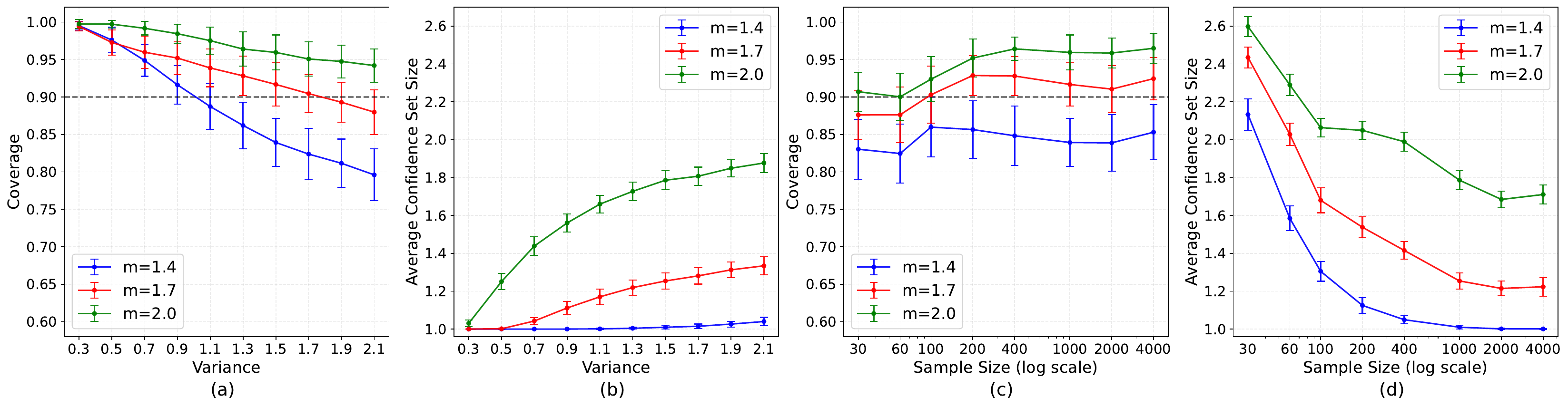}
    \caption{Confidence set heatmaps, coverage, and confidence set size for GMM simulations with stochastic FCM clustering with the fuzziness parameter $m \in \{1.4, 1.7, 2.0\}$. The heatmaps are for the same setting as in the left-most panel (small variance) of Figure \ref{fig:heatmaps}. (a) and (b) show the results for varying $\sigma^2$ for fixed sample size $n = 1000$, while (c) and (d) are for fixed variance $\sigma^2 = 1.5$ and increasing sample size $n$. }
    \label{fig:GMM_FCM}
\end{figure}

First, let us compare the confidence set heatmaps in the top row of Figure \ref{fig:GMM_FCM} to the left most panel of Figure \ref{fig:heatmaps} based on stochastic GMM clustering. Clearly, $m = 1.4$ leads to no uncertainty in the cluster labels, while $m = 2.0$ ends up with excessively uncertain cluster labels. Though $m = 1.7$ shows more uncertainty than stochastic GMM clustering, the overall pattern suggests that this choice of $m$ leads to a practically useful level of uncertainty in the cluster labels.

The bottom row of Figure \ref{fig:GMM_FCM} shows the coverage and confidence set size for varying $\sigma^2$ and $n$, equivalent to the setting of the top row of Figure \ref{fig:GMM}. As hinted by the heatmaps, we confirm that $m = 1.4$ leads to under-coverage like the naive split conformal clustering (Algorithm \ref{alg:split_conformal_clustering} with hard GMM clustering), while $m = 2.0$ leads to over-coverage like the naive cutoff method (Algorithm \ref{alg:cutoff}). On the other hand, $m = 1.7$ shows a good balance between coverage and confidence set size, even though the coverage drops further as $\sigma^2$ increases. Overall, these results show that a reasonable choice of the fuzziness parameter $m$ in stochastic FCM clustering can perform well in practice, which opens up a promising avenue for future research on the choice of $m$ and its theoretical properties in the context of conformal inference for clustering.

\subsection{APOGEE Data}
We next apply our method to the APOGEE dataset, which has been studied extensively in the astronomical literature, often with conflicting conclusions regarding the true number of clusters and the feasibility of confident clustering \citep{casamiquela2021impossibility,chen2018chemodynamical,ratcliffe2020tracing,pagnini2025abundance,berni2024searching}. We follow the preprocessing and validation philosophy of \citet{chang2025unsupervised}, which uses the APOGEE DR17 value-added catalog of globular-cluster stars as its starting point and emphasizes systematic comparison across preprocessing, dimension-reduction, and clustering choices. In particular, from the abundance sets considered there, we choose \(11\) features, while using their reported labels as a reference partition due to the lack of ground-truth labels. Their analysis suggests that K-means with $K = 8$ clusters provides a stable/generalizable target, and we adopt their recommended t-SNE projection (perplexity 100) for low-dimensional visualization.

Because K-means is a hard-label algorithm, it is not suitable to reflect uncertainty. Instead, we employ a stochastic version of fuzzy-c-means, a soft clustering method that reduces to K-means as its fuzziness parameter approaches 1. We choose a fuzziness parameter relatively close to 1 to preserve the qualitative behavior of K-means while enabling stochastic label generation, and use a random forest classifier on the training set.

The confidence sets produced in the original feature space are visualized in the 2D t-SNE space (Figure \ref{fig:astro_plot_fcms}, top). The left plot displays the reference labels. The middle plot shows the cluster assignments predicted by the stochastic fuzzy-c-means procedure for the training and calibration sets. The right plot presents the resulting confidence sets for the test points.

\begin{figure}[!ht]
    \centering
    \subfloat{\includegraphics[width=\linewidth]{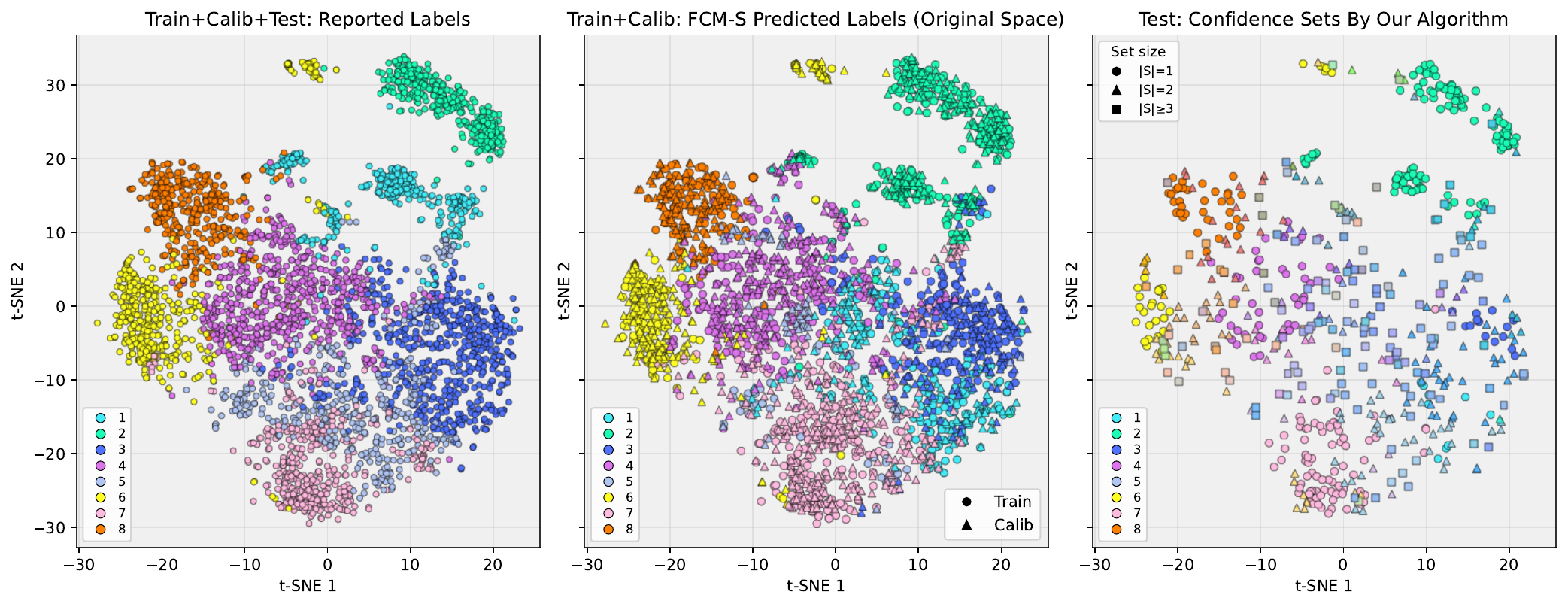}} \\
    \subfloat{\includegraphics[width=0.67\linewidth]{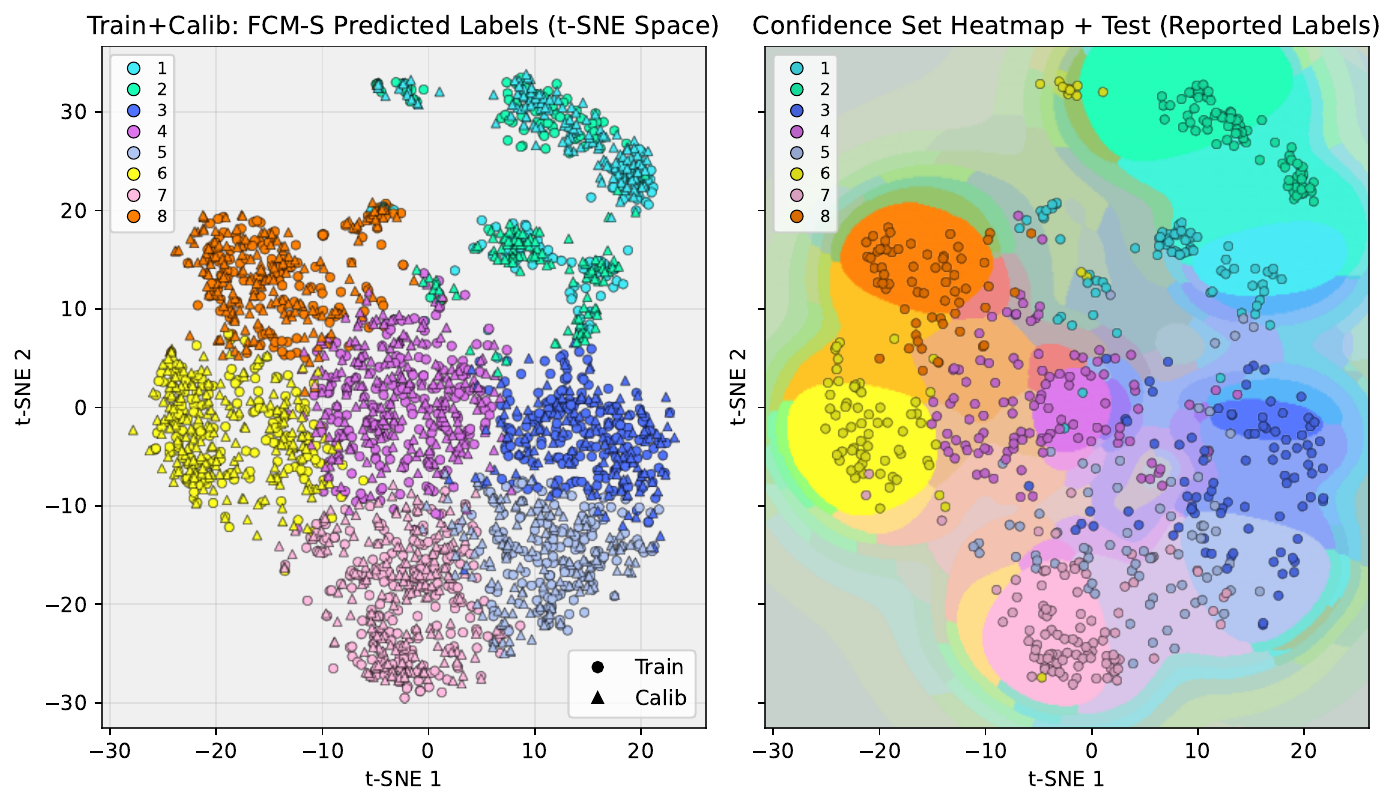}}
    \caption{Application of Algorithm \ref{alg:split_conformal_clustering_stochastic} to the APOGEE dataset with stochastic fuzzy-c-means. Top row: our method is applied in the original feature space and the outputs are visualized in the two-dimensional t-SNE space. From left to right, the plots show the reported reference labels, the predicted cluster labels, and the confidence sets for the test points. Bottom row: our method is applied directly in the two-dimensional t-SNE space. The left plot shows the predicted cluster labels, and the right plot shows a heatmap of the confidence sets over the projected space with the test points overlaid using their reported labels. As in Figure \ref{fig:pbmc_plots}, marker shape encodes the confidence set size, and colors indicate the cluster labels contained in the confidence set.}
    \label{fig:astro_plot_fcms}
\end{figure}

As before, marker shapes indicate the confidence set size (circles for singletons, triangles for pairs, squares for size $\ge 3$), and colors are blended to reflect the contained labels, with lighter shades indicating greater uncertainty. We observe that Clusters 2, 6, 7, and 8 tend to retain more singleton outputs, which directly aligns with the findings in \citet{chang2025unsupervised}---see their Figure 4C---identifying these specific clusters as having the highest local generalizability.

For completeness, we also apply and visualize our method directly in the 2D t-SNE space using the same prior choice of $K$, the stochastic fuzzy-c-means model, and a random forest classifier (Figure \ref{fig:astro_plot_fcms}, bottom). The left plot displays the predicted cluster assignments for the training and calibration observations. The right plot presents a heatmap of the confidence sets over the projected domain, constructed and colored as before, with overlaid test observations. As reiterated previously, while the heatmap illustrates the behavior of prediction sets in the projected space, such visualizations should be interpreted cautiously; scientifically robust conclusions must rely on the high-dimensional original space analysis.

Although \citet{chang2025unsupervised} recommends K-means as the best clustering algorithm for this dataset, we also investigate the results when a different clustering algorithm is applied. This serves primarily to demonstrate the loss of interpretability in downstream tasks when an inappropriate clustering algorithm is used. To this end, we apply our method using stochastic GMM clustering. The results are presented in Figure \ref{fig:astro_plot_gmms}. 

As before, the first panel of the top row of Figure \ref{fig:astro_plot_gmms} shows the output of our algorithm applied in the original feature space ($p = 11$), visualized via a two-dimensional t-SNE projection. The second panel displays the cluster labels predicted by the stochastic GMM clustering algorithm on the training and calibration sets, corresponding to Steps 2 and 4 of Algorithm \ref{alg:split_conformal_clustering_stochastic}. The right panel presents the confidence sets produced by Algorithm \ref{alg:split_conformal_clustering_stochastic} for the test points in the original space. The markers are coded by the size of the confidence set, and their colors correspond to the cluster labels contained in the confidence sets. The bottom row shows the results of our algorithm when applied directly to the data in the projected space using stochastic GMM clustering. The left panel shows the predicted labels for the training and calibration sets, while the right panel provides a heatmap of the confidence sets for points across the projected space.

\begin{figure}[!ht]
    \centering
    \subfloat{\includegraphics[width=\linewidth]{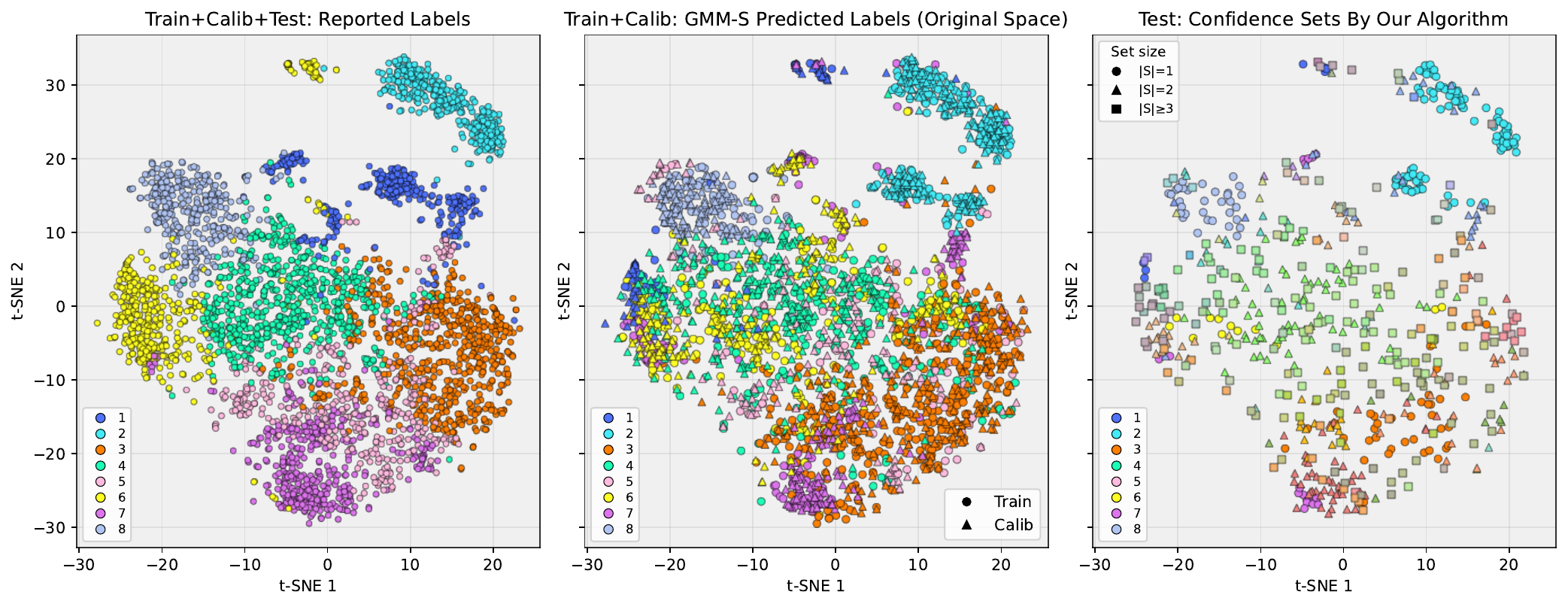}} \\
    \subfloat{\includegraphics[width=0.67\linewidth]{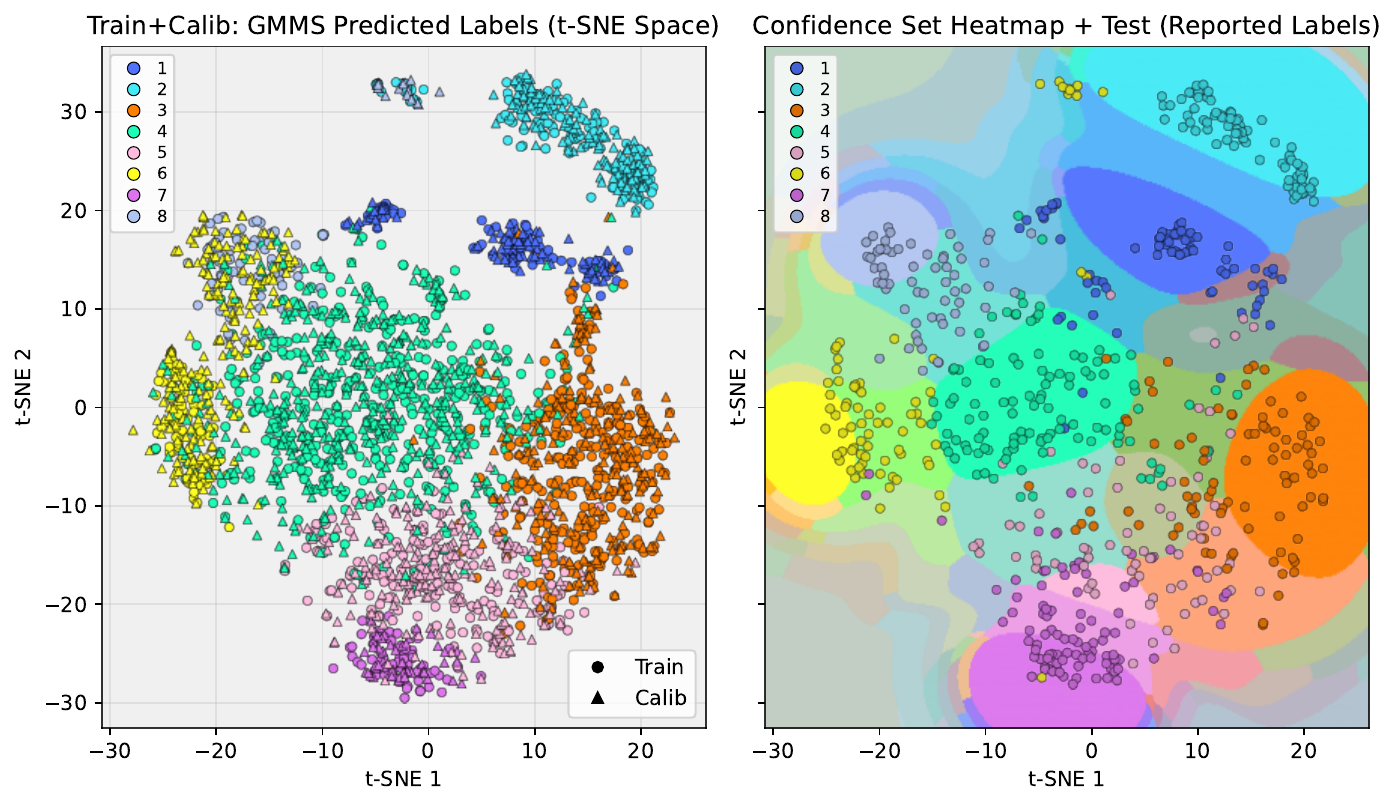}}
    \caption{Application of Algorithm \ref{alg:split_conformal_clustering_stochastic} to the APOGEE dataset with stochastic GMM clustering. Top row: our method is applied in the original feature space and the outputs are visualized in the two-dimensional t-SNE space. From left to right, the plots show the reported reference labels, the predicted cluster labels, and the confidence sets for the test points. Bottom row: our method is applied directly in the two-dimensional t-SNE space. The left plot shows the predicted cluster labels, and the right plot shows a heatmap of the confidence sets over the projected space with the test points overlaid using their reported labels. As in Figure \ref{fig:pbmc_plots}, marker shape encodes the confidence set size, and colors indicate the cluster labels contained in the confidence set.}
    \label{fig:astro_plot_gmms}
\end{figure}

Comparing these results to Figure \ref{fig:astro_plot_fcms}, it is evident that the Gaussian mixture model, which is ill-suited for this dataset, fails to confidently recover the otherwise stable and generalizable clusters. In fact, the only cluster label exhibiting high certainty (singleton prediction sets denoted by circles) is Cluster 2, which does not constitute a particularly meaningful finding for this dataset. Consequently, the resulting confidence sets are difficult to interpret for downstream analysis. Furthermore, in the projected space, although the algorithm yields visually coherent regions, the intrinsic properties of the original data are no longer preserved. Thus, while applying the method in a projected space might seem appealing for visualization purposes, the resulting interpretations might not be reliably translated back to the original dataset to drive scientific discovery.

\renewcommand{\refname}{Appendix References}
\putbib
\end{bibunit}

\end{document}